\documentclass[journal]{IEEEtran}
\usepackage{amsmath,amsfonts}
\usepackage{algorithmic}
\usepackage{algorithm}
\usepackage{array}
\usepackage[caption=false,font=normalsize,labelfont=sf,textfont=sf]{subfig}
\usepackage{textcomp}
\usepackage{stfloats}
\usepackage{url}
\usepackage{verbatim}
\usepackage{graphicx}
\usepackage{multirow}
\usepackage{cite}
\usepackage{amssymb}
\usepackage{amsthm}
\usepackage{booktabs}
\usepackage{cuted}
\usepackage{graphicx}
\newtheorem{theorem}{Theorem}

\newtheorem{corollary}{Corollary}

\newtheorem{lemma}{Lemma}

\newtheorem{proposition}{Proposition}
\newtheorem{remark}{Remark}

\def\m #1{\boldsymbol{#1}}

\def\bee{\begin{equation}}
	\def\ene{\end{equation}}

\def\beq{\begin{eqnarray}}
	\def\enq{\end{eqnarray}}

\begin{document}

\title{Deterministic Cram\'{e}r-Rao Bounds for Coherent Direction-of-Arrival Estimation: Rank Information Versus Coherence Structure}
\author{Weichao~Zheng and Zai~Yang
	
	\thanks{The authors are with the School of Mathematics and Statistics,
		Xi’an Jiaotong University, Xi’an 710049, China, E-mail: zwcabc@stu.xjtu.edu.cn; yangzai@xjtu.edu.cn. (\textit{Corresponding author: Zai Yang.})}}
\maketitle

\begin{abstract}
    Direction-of-arrival (DOA) estimation is a fundamental problem in array signal processing, for which the Cram\'{e}r–Rao bound (CRB) serves as a standard performance benchmark under both stochastic and deterministic source models. In multipath environments, the source matrix is of low-rank and, more specifically, exhibits a within-group proportionality structure. Although stochastic CRBs for coherent DOA estimation have been studied, theoretical results for their deterministic counterparts remain limited. This paper aims to bridge this gap. We first derive a tangent-space condition characterizing when structural constraints on the source matrix can strictly reduce the frequency CRB. We then prove that, for uniform linear arrays, imposing only the low-rankness of the source matrix leaves the frequency CRB unchanged, although it reduces the source-matrix CRB whenever the rank constraint is nontrivial, whereas fully exploiting the coherence structure yields a strictly smaller frequency CRB for almost all parameter values under mild conditions. These conclusions are further extended to analytic array manifolds and multidimensional DOA models, with componentwise results established for uniform planar arrays. The results demonstrate that performance gains in DOA estimation arise from the detailed coherence structure rather than low-rankness alone. Numerical experiments are finally provided to validate the theoretical findings.
\end{abstract}

\begin{IEEEkeywords}
	 DOA estimation, Cram\'{e}r-Rao bound, deterministic, coherence, low-rank.
\end{IEEEkeywords}

\section{Introduction}
Direction-of-Arrival (DOA) estimation is a central problem in array signal processing, whose goal is to estimate the directions of multiple sources impinging on an array, based on a series of snapshots of the array output \cite{stoica2005spectral}. It has found broad applications, including radar \cite{nion2010tensor}, sonar \cite{cox1989fundamentals}, wireless communications \cite{tsai2018millimeter}, while related estimation techniques has also been applied to modal analysis \cite{li2018atomic}. As a lower bound on the error covariance of any unbiased estimator, Cram\'{e}r-Rao bound (CRB) has been widely adopted as a fundamental performance benchmark for DOA estimators under both deterministic and stochastic models \cite{stoica1989music,stoica1990performance}, where the source signals are treated as unknown deterministic parameters or random processes characterized by their statistical properties, respectively. Related CRB analyses have been developed for various array geometries \cite{liu2017cramer,wang2016coarrays,hua1991note}, which provides theoretical guidance for algorithm development \cite{stoica1989music}, performance evaluation \cite{stoica1990performance}, and array design \cite{gazzah2006cramer}.\par
In practical propagation environments, multipath reflections (e.g., from buildings, the ground, sea surfaces, or ionospheric refraction) give rise to coherent sources impinging on the array, which invalidates many conventional DOA estimators \cite{krim2002two}. This challenge has motivated extensive research, including algorithm development and CRB analyses \cite{shan1985spatial,ye2007doa,stoica1996maximum,zhang2009estimation,xie2017source,choi2000maximum}. Among these studies, some exploit only the low-rank structure of the source matrix induced by coherence \cite{shan1985spatial,ye2007doa,stoica1996maximum}, whereas others incorporate the detailed coherence relationships among the source components \cite{zhang2009estimation,xie2017source,choi2000maximum}.
Intuitively, exploiting more structural information is expected to yield a lower CRB. However, it is shown in \cite{stoica1996maximum} that, under the stochastic model, the CRB obtained by exploiting only the low-rank structure of the source matrix is identical to that obtained without exploiting this structure, with a similar result established in \cite{xie2017source} for the special case involving a single direct path. Moreover, numerical results in \cite{xie2017source} indicate that exploiting the coherent source structure yields a strictly lower CRB, although a rigorous proof of this relationship remains unavailable. These findings suggest that, under the stochastic model, potential performance gains may arise from the detailed coherence structure rather than from low-rankness alone.
\par
For the deterministic model, however, the source samples are nuisance parameters whose number grows with the snapshots, so the rank and coherence constraints act on a different parameter space and their effects cannot be directly inferred from the stochastic theory.
Deterministic methods retrieve the DOAs directly from the array output matrix and generally remain applicable in the presence of coherent sources \cite{hyder2010direction,bhaskar2013atomic,yang2016exact}. Consequently, the specific impact of source coherence under deterministic modeling has received limited attention. Nevertheless, the applicability of deterministic methods does not imply that source coherence is irrelevant to the algorithm performance limits. It is therefore essential to reveal whether and how the additional structures can improve the estimation accuracy. A natural conjecture is that results analogous to those established under the stochastic model also hold under the deterministic setting. Fig. \ref{fig1} shows a simulation result that provides empirical evidence for this conjecture. The CRBs obtained without and with exploiting the low-rank structure, denoted by CRB-G and CRB-LR, respectively, coincide with each other. At $10$ dB, their relative difference is only $9.0370\times 10^{-16}$, which can be attributed to numerical round-off errors. By contrast, the CRB obtained by exploiting the coherence structure, denoted by CRB-C, is strictly lower than CRB-G. Despite this empirical evidence, these relationships still require rigorous theoretical justification.
\par
 \begin{figure}[htbp] 
 	\centering 
 	\includegraphics[width=0.9\linewidth]{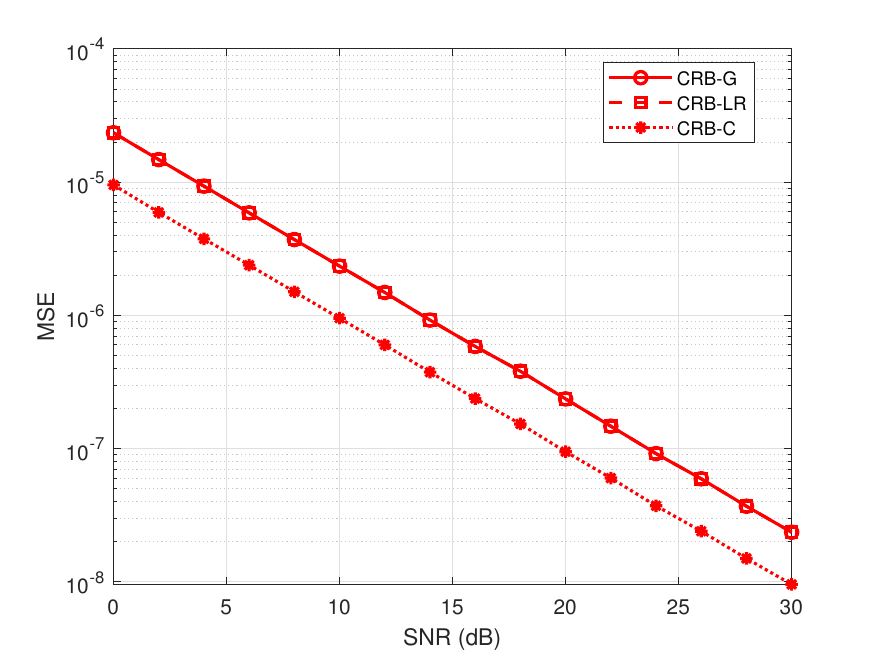} 
 	\caption{CRB comparison under the general, low-rank, and coherence models for a ULA with $M=20$ sensors and $L=20$ snapshots.} 
 	\label{fig1} 
 \end{figure}
   \begin{table*}[t]
 	\centering
 	\caption{Comparison with related works. CRB-G, CRB-LR, and CRB-C denote the CRBs obtained under the general model, the low-rank model, and the coherence model, respectively.}
 	\footnotesize
 	\label{Table I}
 	\setlength{\tabcolsep}{3.5pt}
 	\renewcommand{\arraystretch}{1.15}
 	\begin{tabular}{cccc}
 		\toprule
 		Work & Modeling on the sources & Source structure exploited & Results on CRB relationship \\
 		\midrule
 		\cite{stoica1996maximum} & Stochastic & Low-rank structure & $\textnormal{CRB-G}=\textnormal{CRB-LR}$ (theoretical result) \\
 		\cite{xie2017source} & Stochastic & Coherence structure & $\textnormal{CRB-G}>\textnormal{CRB-C}$ (numerical result) \\
 		\cite{choi2000maximum} & Deterministic & Coherence structure & $\textnormal{CRB-G}>\textnormal{CRB-C}$ (numerical result)  \\
 		Proposed & Deterministic & Coherence structure & $\textnormal{CRB-G}=\textnormal{CRB-LR}>\textnormal{CRB-C}$ (theoretical result) \\
 		\bottomrule
 	\end{tabular}
 	
 \end{table*}
In this paper, under the deterministic signal model, we rigorously establish the relationships among the CRBs corresponding to the general, low-rank, and coherence models. The main contributions are summarized as follows.
\begin{itemize}
	\item We recast the exploitation of additional structural information as imposing constraints on the source matrix and establish a necessary and sufficient condition under which constraints on a subset of parameters affect the CRB for the remaining parameters. These results may be of independent interest.
	\item Based on the above result, we rigorously prove that, for uniform linear arrays (ULAs), exploiting the low-rank structure of the source matrix does not change the frequency CRB, whereas exploiting the detailed coherence structure can yield a strictly smaller CRB.
	\item We extend the CRB results from ULAs to general one-dimensional array geometries and further to multidimensional arrays, such as planar arrays. These extensions demonstrate that the established relationships among the three CRBs are not specific to a particular array structure, but reflect more fundamental properties of deterministic coherent DOA estimation.
\end{itemize}
These results demonstrate that exploiting low-rankness alone does not yield an intrinsic performance gain, whereas fully exploiting the coherence structure can improve the achievable estimation accuracy. To the best of our knowledge, these relationships have not been rigorously proved in the existing literature. Table \ref{Table I} compares the present study with related works and highlights its main distinctions and theoretical contributions.\par 
The remainder of this paper is organized as follows. Section \ref{Signal Model} introduces the problem formulation and presents the three source models considered in this paper, namely the general, low-rank, and coherence models. Section \ref{Relationships} establishes the relationships among the CRBs under the three models, and provides rigorous proofs of these results. Section \ref{extensions} extends the conclusions to general array geometries and multidimensional DOA estimation problems. Finally, Section \ref{validation} presents numerical experiments to validate the theoretical results.

\par
Notations used in this paper are as follows. Boldface lowercase and uppercase letters denote vectors and matrices, respectively. The complex conjugate, transpose,
conjugate transpose, inverse, vectorization, Frobenius norm, column space and pseudoinverse of a matrix $\m{X}$ are denoted as $\overline{\m{X}},\m{X}^{T},\m{X}^{H},\m{X}^{-1}$, $\textnormal{vec}(\m{X}),\|\m{X}\|_{F}, \textnormal{col}(\m{X}),\m{X}^{\dagger}$, respectively. The real and imaginary part of a matrix $\m{X}$ are given by $\Re(\m{X}),\Im(\m{X})$, respectively. For a matrix $\m{X}$, its $i$-th row, $j$-th column and $(i,j)$-th element are denoted as $\m{X}_{i,:},\m{X}_{:,j},X_{ij}$, respectively. For index sets $\mathcal{I}_{1}$ and $\mathcal{I}_{2}$, $\m{X}_{\mathcal{I}_{1},\mathcal{I}_{2}}$ denotes the submatrix of $\m{X}$ with row indices in $\mathcal{I}_{1}$ and column indices in $\mathcal{I}_{2}$. For matrices $\m{X}$ and $\m{Y}$, $\m{X}\otimes\m{Y}$ denotes their Kronecker product. For a partitioned matrix $\m{X}=\begin{bmatrix}
\m{X}_{11}&\m{X}_{12}\\ \m{X}_{21}&\m{X}_{22}
\end{bmatrix}$, the Schur complement of $\m{X}_{11}$ in $\m{X}$ is given by $\textnormal{Sch}(\m{X},\m{X}_{11})=\m{X}_{22}-\m{X}_{21}\m{X}_{11}^{-1}\m{X}_{12}$. For matrices $\m{A}$ and $\m{B}$, $\m{A}\succeq \m{B}$ means that $\m{A}-\m{B}$ is positive semidefinite. The all-zero matrix of size $M\times N$ and the identity matrix of size $N$ are given by $\m{0}_{M\times N},\m{I}_{N}$, respectively. Given matrices $\m{X}_{1},\m{X}_{2},...,\m{X}_{D}$, $\textnormal{blkdiag}\left(\m{X}_{1},\m{X}_{2},...,\m{X}_{D}\right)$ denotes the block diagonal matrix with $\m{X}_{1},\m{X}_{2},...,\m{X}_{D}$ on its diagonal. For a complex-valued matrix $\m{X}$, we define
$
\mathcal{R}(\m{X})
=
\begin{bmatrix}
\Re(\m{X}) & -\Im(\m{X}) \\
\Im(\m{X}) & \Re(\m{X})
\end{bmatrix}
$ and $\mathcal{V}(\m{X})=\begin{bmatrix}
	\Re(\m{X})\\
	\Im(\m{X})
\end{bmatrix}$.
\section{Signal Model and Preparations}  \label{Signal Model}
\subsection{Signal Model}
Consider $D$ far-field narrowband sources $\{c_{d}(t)\}_{d=1}^{D}$ impinging on an $N$-element ULA with half-wavelength inter-element spacing from directions $\{\theta_{d1}\}_{d=1}^{D}$. Under a multipath propagation scenario, the $d$-th signal generates $P_{d}$ components arriving at the ULA from directions $\{\theta_{dp}\}_{p=1}^{P_{d}}$. These components are coherent with the reference signal $c_{d}(t)$, and the corresponding coherence coefficients, also known as fading coefficients, are denoted by $\{\rho_{dp}\}_{p=1}^{P_{d}}$. To avoid ambiguity between the direct-path component of one source and the multipath component of another, we assume that the source signals $\{c_{d}(t)\}_{d=1}^{D}$ are pairwise incoherent, i.e., no two rows of $\m{C}$ are proportional. Let $K=\sum_{d=1}^{D}P_{d}$
denote the total number of incident components. The array output is an $N\times 1$ vector, given by 
\begin{equation}
	\begin{aligned}
		\m{y}(t_{l})&=\sum_{d=1}^{D}\sum_{p=1}^{P_{d}}\m{a}(f_{dp})\rho_{dp}c_{d}(t_{l})+\m{e}(t_{l}),\quad l=1,2,...,L, \label{2.1.1}
	\end{aligned}
\end{equation}
where $L$ denotes the number of snapshots, $f_{dp}\in[-\frac{1}{2},\frac{1}{2})$ has a one-to-one relation to $\theta_{dp}$ by $f_{dp}=\frac{1}{2}\sin\theta_{dp}$, $\m{a}(f)$ denotes the steering vector of $f$, given by
\begin{equation}
	\begin{aligned}
		\m{a}(f)=\left[1,e^{i2\pi f},...,e^{i2\pi(N-1)f}\right]^{T}, \label{2.1.2}
	\end{aligned}
\end{equation}
and $\m{e}(t)$ denotes the Gaussian noise vector. We rewrite (\ref{2.1.1}) into a compact form as
\begin{equation}
	\begin{aligned}
		\m{Y}=\m{A}(\m{f})\m{\Psi}\m{C}+\m{E}, \label{2.1.3}
	\end{aligned}
\end{equation} where $\m{A}(\m{f})=[\m{a}(f_{1}),\m{a}(f_{2}),...,\m{a}(f_{K})]\in\mathbb{C}^{N\times K}$ denotes the steering matrix, and $\m{f}=\left[\m{f}_{1}^{T},\m{f}_{2}^{T},...,\m{f}_{D}^{T}\right]^{T}$ denotes the frequency vector with $\m{f}_{d}$ composed of $\{f_{dp}\}_{p=1}^{P_{d}}$; $\m{\Psi}=\textnormal{blkdiag}\left(\m{\rho}_{1},\m{\rho}_{2},...,\m{\rho}_{D}\right)\in\mathbb{C}^{K\times D}$ denotes the fading coefficient matrix with $\m{\rho}_{d}$ composed of $\{\rho_{dp}\}_{p=1}^{P_{d}}$; $\m{C}\in\mathbb{C}^{D\times L}$ denotes the direct-path source signal matrix, with its $(d,l)$-element given by $c_{d}(t_{l})$; and $\m{E}=\left[\m{e}(t_{1}),\m{e}(t_{2}),...,\m{e}(t_{L})\right]\in\mathbb{C}^{N\times L}$ denotes the noise matrix. 
\subsection{Three Models and Model Parameters}
The source matrix $\m{S}=\m{\Psi}\m{C}$ possesses additional structural information. First, it is low rank, with $\textnormal{rank}(\m{S})=R\leq K$, where $R=\textnormal{rank}(\m{C})\leq D$. Second, certain rows of $\m{S}$ are mutually coherent due to the multipath propagation structure.
 To examine how exploiting different levels of structural information in the source matrix affects parameter estimation performance, we consider three formulations:
 a general model that imposes no additional structure on $\m{S}$, a low-rank model that exploits only the low-rank property of $\m{S}$, and a coherent model that fully leverages the multipath-induced coherence structure. The details are as follows.
\subsubsection{General Model} We first consider the general model, where no additional information of $\m{S}$ is exploited. The model parameter vector is  $\left[\m{f}^{T},\m{\theta}_{\m{S}}^{T}\right]^{T}\in\mathbb{R}^{K+2KL}$, where $\m{\theta}_{\m{S}}$ collects the real-valued parameters in $\m{S}$, i.e., 
\begin{equation}
	\begin{aligned}
		\m{\theta}_{\m{S}}=\Big[\Re(\textnormal{vec}(\m{S}))^{T}, \Im(\textnormal{vec}(\m{S}))^{T}\Big]^{T}\in\mathbb{R}^{2KL}. \label{2.2.1}
	\end{aligned}
\end{equation}
\subsubsection{Low-Rank Model} In this case, we consider a more structured model, where the low-rank property of the source matrix $\m{S}$ is exploited. Throughout the analysis of the low-rank model for coherent sources, we assume $R<\min\{K,L\}$. When $R=\min\{K,L\}$, the rank constraint is trivial and the low-rank model coincides with the general model. We consider the following rank-$R$ decomposition
\begin{equation}
	\begin{aligned}
		\m{S}=\m{\Phi}\m{M}, \label{2.2.2}
	\end{aligned}
\end{equation}
where $\m{\Phi}\in\mathbb{C}^{K\times R}$ and $\m{M}\in\mathbb{C}^{R\times L}$. Without loss of generality, possibly after a permutation of the columns of $\m{S}$, the first $R$ columns of $\m{S}$ are assumed to be linearly independent. We then set the first $R$ columns of $\m{M}$ to be the identity matrix $\m{I}_{R}$, which removes the factorization ambiguity and ensures its uniqueness. Specifically, we have $\m{M}=\left[\m{I}_{R},\tilde{\m{M}}\right]$. As this operation only amounts to a re-parameterization of the same signal model, it does not affect the resulting CRB. The model parameters in this case are $\left[\m{f}^{T},\m{\theta}_{\m{\Phi}}^{T},\m{\theta}_{\tilde{\m{M}}}^{T}\right]^{T}\in\mathbb{R}^{K+2KR+2RL-2R^{2}}$, with 
\begin{equation}
	\begin{aligned}
		\m{\theta}_{\m{\Phi}}&=\left[\Re\left(\textnormal{vec}(\m{\Phi})\right)^{T},\Im\left(\textnormal{vec}(\m{\Phi})\right)^{T}\right]^{T}\in\mathbb{R}^{2KR}, \\
		\m{\theta}_{\tilde{\m{M}}}&=\left[\Re\left(\textnormal{vec}\left(\tilde{\m{M}}\right)\right)^{T},\Im\left(\textnormal{vec}\left(\tilde{\m{M}}\right)\right)^{T}\right]^{T}\in\mathbb{R}^{2RL-2R^{2}}. \label{2.2.3}
	\end{aligned}
\end{equation}
The rank-$R$ structure of $\m{S}$ is then equivalent to restricting the parameter vector $\m{\theta}_{\m{S}}$ to the constrained set
\begin{equation}
	\begin{aligned}
		\m{\theta}_{\m{S}}\in \mathcal{M}_{1}\triangleq \left\{\m{\theta}_{\m{S}}:\m{\theta}_{\m{S}}=\m{g}_{1}\left(\m{\theta}_{\m{\Phi}},\m{\theta}_{\tilde{\m{M}}}\right)\right\}, \label{2.2.4}
	\end{aligned}
\end{equation}
where $\m{g}_{1}$ is a smooth mapping determined by \eqref{2.2.2}. Moreover, due to the uniqueness of the factorization in \eqref{2.2.2}, the Jacobian matrix of $\m{g}_{1}$, denoted by $\m{J}_{1}$, is of full column rank.
\subsubsection{Coherence Model} We now consider the coherence model, which fully exploits the coherence structure of the source matrix $\m{S}$. We consider the following decomposition
\begin{equation}
	\begin{aligned}
		\m{S}=\m{\Psi}\m{C},\label{2.2.5}
	\end{aligned}
\end{equation}
where $\m{\Psi}=\textnormal{blkdiag}(\m{\rho}_{1},\m{\rho}_{2},...,\m{\rho}_{D})\in\mathbb{C}^{K\times D}$ and $\m{C}\in\mathbb{C}^{D\times L}$. To remove the scaling ambiguity, for each $d$, we choose an index $l_{d}\in\{1,2,...,L\}$ such that $C_{d,l_{d}}\neq 0$. Such an index always exists; otherwise, the $d$-th row of $\m{C}$ would be identically zero. We then absorb $C_{d,l_{d}}$ into $\m{\rho}_{d}$ such that $C_{d,l_{d}}=1$. This normalization removes the remaining scaling ambiguity and thereby ensures the uniqueness of the factorization. Foe each $l$, define $\mathcal{C}_{l}=\{d:l_{d}\neq l\}$, and collect the free entries in the $l$-th column of $\m{C}$ as $\tilde{\m{c}}_{l}=\m{C}_{\mathcal{C}_{l},l}\in\mathbb{C}^{|\mathcal{C}_{l}|}$. Since one entry in each row of $\m{C}$ is fixed at 1, the total number of free complex-valued entries is $\sum_{l=1}^{L}|\mathcal{C}_{l}|=DL-D$. The model parameters in this case are $\left[\m{f}^{T},\m{\theta}_{\m{\rho}_{1}},\m{\theta}_{\m{\rho}_{2}},...,\m{\theta}_{\m{\rho}_{D}},\m{\theta}_{\tilde{\m{c}}_{1}},\m{\theta}_{\tilde{\m{c}}_{2}},...,\m{\theta}_{\tilde{\m{c}}_{L}}\right]^{T}\in\mathbb{R}^{3K+2DL-2D}$, with
\begin{equation}
	\begin{aligned}
		\m{\theta}_{\m{\rho}_{d}}&=\left[\Re\left(\m{\rho}_{d}\right)^{T},\Im\left(\m{\rho}_{d}\right)^{T}\right]^{T}\in\mathbb{R}^{2P_{d}}, \\
		\m{\theta}_{\tilde{\m{c}}_{l}}&=\left[\Re\left(\tilde{\m{c}}_{l}\right)^{T},\Im\left(\tilde{\m{c}}_{l}\right)^{T}\right]^{T}\in\mathbb{R}^{2|\mathcal{C}_{l}|}. \label{2.2.6}
	\end{aligned}
\end{equation}
Similarly, the source matrix $\m{S}$ exhibits the coherence structure \eqref{2.2.5} is equivalent to that $\m{\theta}_{\m{S}}$ lies in the constrained set
\begin{equation}
	\begin{aligned}
		\m{\theta}_{\m{S}}\in \mathcal{M}_{2}\triangleq \left\{\m{\theta}_{\m{S}}:\m{\theta}_{\m{S}}=\m{g}_{2}\left(\m{\theta}_{\m{\rho}_{1}},...,\m{\theta}_{\m{\rho}_{D}},\m{\theta}_{\tilde{\m{c}}_{1}},...,\m{\theta}_{\tilde{\m{c}}_{L}}\right)\right\}, \label{2.2.7}
	\end{aligned}
\end{equation}
where $\m{g}_{2}$ is a smooth mapping determined by \eqref{2.2.5}, with its Jacobian matrix, denoted by $\m{J}_{2}$, being of full column rank.
\subsection{Prior Art: CRB Derivation}
The derivation of the CRB under deterministic models has been extensively studied \cite{stoica1989music,stoica1990performance,liu2017cramer,wang2016coarrays,hua1991note}. Under the assumption of Gaussian noise with variance $\sigma$, let $\m{X}=\m{A}(\m{f})\m{S}$ denote the expectation of $\m{Y}$. By denoting the model parameters as $\m{\alpha}$, the FIM of the model parameters that characterizes $\m{X}$, regardless of whether the noise variance $\sigma$ is known or treated as an unknown nuisance parameter, is then given by \cite{stoica1989music}
\begin{equation}
	\begin{aligned}
		\m{I}(\m{\alpha})=\frac{2}{\sigma}\sum_{n=1}^{N}\sum_{l=1}^{L}\Re\left[\left(\frac{\partial X_{nl}}{\partial \m{\alpha}}\right)\left(\frac{\partial X_{nl}}{\partial \m{\alpha}}\right)^{H}\right].\label{2.3.1}
	\end{aligned}
\end{equation}
Denote $\m{b}(f)$ as the derivative of $\m{a}(f)$ with respect to $f$, i.e.,
\begin{equation}
	\begin{aligned}
		\m{b}(f)=\frac{\partial \m{a}(f)}{\partial f}=\left[0,i2\pi e^{i2\pi f},...,i2\pi(N-1)e^{i2\pi(N-1)f}\right]^{T}.\label{2.3.2}
	\end{aligned}
\end{equation}
The partial derivatives involved in (\ref{2.3.1}) are as follows
\begin{equation}
	\begin{aligned}
		\frac{\partial \m{X}_{:,i}}{\partial \m{S}_{:,j}}&=\m{A}(\m{f})\delta_{ij},  \quad
		\frac{\partial \m{X}_{:,i}}{\partial\m{f}}&=\m{B}(\m{f})\m{D}_{i},\label{2.3.4}
	\end{aligned}
\end{equation}
where $\delta_{ij}$ is the Kronecker delta, $\m{B}(\m{f})=[\m{b}(f_{1}),\m{b}(f_{2}),..,$ $\m{b}(f_{K})]$, and $\m{D}_{i}=\textnormal{diag}(\m{S}_{:,i})$. The CRBs for the parameters are given by the corresponding diagonal entries of $\m{I}^{-1}(\m{\alpha})$, denoted by $\m{\Theta}(\m{\alpha})$. For a vector-valued parameter, we use its CRB to denote the trace of the corresponding CRB submatrix. To simplify the notation, we omit the argument $\m{f}$, or $\m{\alpha}$, whenever no ambiguity arises. 

\section{Relationships Between the CRBs}  \label{Relationships}
In this section, we compare the CRBs obtained under the three structural models and examine how different levels of structural information in the source matrix $\m{S}$ affect the estimation bounds for the frequency vector $\m{f}$ and the source matrix $\m{S}$. We first present the main results and then provide their proofs.
\subsection{Main Results}
It is shown in \eqref{2.2.4} and \eqref{2.2.7} that the parameter CRBs under the low-rank and coherence models can be regarded as constrained CRBs obtained by imposing additional structural constraints on the source matrix $\m{S}$.
 The CRB under parametric constraints has been studied in \cite{gorman2002lower,marzetta1993simple,stoica1998cramer}, which can be summarized as the following theorem.
\begin{theorem}\label{Thm 1}
	Let $\m{\alpha}\in\mathbb{R}^{N}$ be the parameter to be estimated with its unconstrained CRB matrix given by $\m{\Theta}$. Consider the parametric constraint $\m{\alpha}\in\{\m{\alpha}:\m{\alpha}=\m{g}(\m{\varphi})\textnormal{ for some }\m{\varphi}\in\mathbb{R}^{M}\}$ with $M<N$, and $\m{g}$ is smooth with its Jacobian matrix $\m{J}=\frac{\partial \m{g}}{\partial \m{\varphi}}\in\mathbb{R}^{N\times M}$ being of full column rank. The constrained CRB matrix of $\m{\alpha}$ is given by
	\begin{equation}
		\begin{aligned}
			\m{\Theta}^{\textnormal{cons}}=\m{\Theta}-\m{\Theta}\m{U}^{T}\left(\m{U}\m{\Theta}\m{U}^{T}\right)^{-1}\m{U}\m{\Theta},\label{3.1}
		\end{aligned}
	\end{equation}
	where $\m{U}\in\mathbb{R}^{(N-M)\times N}$ has orthonormal rows and satisfies $\m{U}\m{J}=\m{0}$.
\end{theorem}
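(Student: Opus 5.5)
The plan is to reduce the constrained problem to an unconstrained one in the reduced parameter $\m{\varphi}$ and then show that the resulting expression equals \eqref{3.1}.

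\textbf{Step 1 (reparameterized CRB).} Write $\m{F}=\m{\Theta}^{-1}$ for the unconstrained FIM; it is positive definite because $\m{\Theta}$ is assumed to exist. Under the constraint, the model depends on $\m{\varphi}$ only through $\m{\alpha}=\m{g}(\m{\varphi})$. By the chain rule the FIM of $\m{\varphi}$ is $\m{J}^{T}\m{F}\m{J}$, which is invertible because $\m{J}$ has full column rank. The CRB of the induced estimate $\m{g}(\hat{\m{\varphi}})$ follows from the standard transformation rule:
\begin{equation*}
\m{\Theta}^{\textnormal{cons}}=\m{J}\left(\m{J}^{T}\m{F}\m{J}\right)^{-1}\m{J}^{T}.
\end{equation*}
This is the usual definition of the constrained CRB, as in \cite{stoica1998cramer}.

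\textbf{Step 2 (identity in the weighted inner product).} The goal is the matrix identity
\begin{equation*}
\m{J}\left(\m{J}^{T}\m{F}\m{J}\right)^{-1}\m{J}^{T}=\m{\Theta}-\m{\Theta}\m{U}^{T}\left(\m{U}\m{\Theta}\m{U}^{T}\right)^{-1}\m{U}\m{\Theta}.
\end{equation*}
Set $\m{W}=\m{F}^{1/2}$, and define $\m{P}_{1}$ as the orthogonal projector onto $\textnormal{col}(\m{W}\m{J})$ and $\m{P}_{2}$ as the orthogonal projector onto $\textnormal{col}(\m{W}^{-1}\m{U}^{T})$. Then
\begin{equation*}
\m{W}\m{J}\left(\m{J}^{T}\m{F}\m{J}\right)^{-1}\m{J}^{T}\m{W}=\m{P}_{1},\qquad
\m{W}^{-1}\m{U}^{T}\left(\m{U}\m{\Theta}\m{U}^{T}\right)^{-1}\m{U}\m{W}^{-1}=\m{P}_{2},
\end{equation*}
where the second uses $\m{\Theta}=\m{W}^{-2}$. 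Multiplying the target identity on both sides by $\m{W}$ turns it into $\m{P}_{1}=\m{I}-\m{P}_{2}$.

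\textbf{Step 3 (complementary subspaces).} It remains to show that $\textnormal{col}(\m{W}\m{J})$ and $\textnormal{col}(\m{W}^{-1}\m{U}^{T})$ are orthogonal complements in $\mathbb{R}^{N}$.
\begin{itemize}
\item Orthogonality: $(\m{W}^{-1}\m{U}^{T})^{T}\m{W}\m{J}=\m{U}\m{J}=\m{0}$.
\item Dimensions: the two spaces have dimensions $M$ and $N-M$. The first uses the full column rank of $\m{J}$; the second uses that $\m{U}$ has orthonormal, hence independent, rows.
\end{itemize}
Orthogonal subspaces whose dimensions sum to $N$ are complements, so $\m{P}_{1}+\m{P}_{2}=\m{I}$. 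The orthonormality of the rows of $\m{U}$ is not needed beyond full row rank. Note also that the rows of $\m{U}$ span exactly $\textnormal{col}(\m{J})^{\perp}$, since $\m{U}\m{J}=\m{0}$ and the dimensions match. This is why the choice of $\m{U}$ does not affect the result.

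\textbf{Main obstacle.} The algebra is short, so the delicate point is Step 1: justifying that the constrained CRB is $\m{J}(\m{J}^{T}\m{F}\m{J})^{-1}\m{J}^{T}$. This requires smoothness of $\m{g}$ and local identifiability (full column rank of $\m{J}$), which are exactly the hypotheses of the statement. I would invoke the transformation-of-parameters CRB (\cite{gorman2002lower,marzetta1993simple}) rather than rederive it.
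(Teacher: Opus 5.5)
Your argument is correct. The paper gives no proof of Theorem~\ref{Thm 1}; it states it as a summary of \cite{gorman2002lower,marzetta1993simple,stoica1998cramer}. So your proposal supplies a derivation the paper omits rather than reproducing one.

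Your route has two parts:
\begin{itemize}
\item You start from the tangent-space (reparameterization) form $\m{J}(\m{J}^{T}\m{F}\m{J})^{-1}\m{J}^{T}$. This is essentially the Stoica--Ng expression, with $\m{J}$ as the tangent basis.
\item You then show it equals the normal-space (Gorman--Hero/Marzetta) form in \eqref{3.1}. The tool is the complementary projectors $\m{P}_{1}+\m{P}_{2}=\m{I}$ in $\m{F}^{1/2}$-whitened coordinates.
\end{itemize}
The algebra in Steps 2--3 checks out.

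What this buys is transparency. It shows the formula depends on $\m{U}$ only through its row space $\textnormal{col}(\m{J})^{\perp}$, so orthonormality is cosmetic. It also isolates where positive definiteness of $\m{F}$ enters: it is needed for $\m{W}$ and for invertibility of $\m{U}\m{\Theta}\m{U}^{T}$. The $\m{J}$-form only requires $\m{J}^{T}\m{F}\m{J}\succ\m{0}$, which is why Stoica and Ng prefer it when the FIM is singular.

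The cost is that the statistical content is still outsourced to Step~1. There, state the bound for any estimator of $\m{\alpha}$ that is unbiased over the constrained family, i.e., the CRB for the function $\m{g}(\m{\varphi})$ of $\m{\varphi}$. Do not state it for $\m{g}(\hat{\m{\varphi}})$, which need not be unbiased even when $\hat{\m{\varphi}}$ is.
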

Note that the standard inverse-form CRB requires the FIM to be positive definite, which, under regularity conditions, requires local identifiability of the parameters \cite{rothenberg1971identification}. Therefore, the CRB may still be well defined even when the parameters are not globally identifiable. For example, a classical sufficient and necessary condition for global identifiability is \cite{davies2012rank}
\begin{equation}
	\begin{aligned}
		K<\frac{\textnormal{spark}(\mathcal{A})-1+\textnormal{rank}(\m{X})}{2},\label{3.2}
	\end{aligned}
\end{equation} 
	where $\textnormal{spark}(\mathcal{A})$ is the spark of 
\begin{equation}
	\begin{aligned}
		\mathcal{A}\triangleq\left\{\m{a}(f):f\in[0,1)\right\}, \label{3.3}
	\end{aligned}
\end{equation}
which is definied as the the smallest number of elements in $\mathcal{A}$ that are linearly dependent.
The analysis below also applies to such locally identifiable but globally non-identifiable parameters.\par
In the problem under consideration, the constraints are imposed only on the parameters $\m{\theta}_{\m{S}}$. Therefore, the CRBs for the frequency vector $\m{f}$ and the source matrix $\m{S}$ need to be treated separately. Applying Theorem \ref{Thm 1}, we propose the following proposition.
\begin{proposition}\label{prop2}
	Consider $\m{f}\in\mathbb{R}^{K},\m{\alpha}\in\mathbb{R}^{n}$ as the parameters to be estimated, with its FIM given by
	\begin{equation}
		\begin{aligned}
			\begin{bmatrix}
				\m{I}_{\m{f}\m{f}}&\m{I}_{\m{f}\m{\alpha}}\\
				\m{I}_{\m{\alpha}\m{f}}&\m{I}_{\m{\alpha}\m{\alpha}}
			\end{bmatrix} \label{3.4}
		\end{aligned}
	\end{equation}
	 being positive definite. 
	Let $\m{\Theta}$ denote the corresponding CRB matrix. Suppose that $\m{\alpha}$ is further constrained as $\m{\alpha}\in\{\m{\alpha}:\m{\alpha}=\m{g}(\m{\varphi})\textnormal{ for some }\m{\varphi}\in\mathbb{R}^{m}\}$ with $m<n$, and $\m{g}$ is smooth with its Jacobian matrix $\m{J}=\frac{\partial \m{g}}{\partial \m{\varphi}}\in\mathbb{R}^{n\times m}$ being of full column rank. Let $\m{\Theta}^{\textnormal{cons}}$ denote the resulting constrained CRB. Partitioning both $\m{\Theta}$ and $\m{\Theta}^{\textnormal{cons}}$ in the same manner as in \eqref{3.4}, we have
	\begin{equation}
		\begin{aligned}
			\m{\Theta}_{\m{\alpha\alpha}}\succeq \m{\Theta}_{\m{\alpha\alpha}}^{\textnormal{cons}},\quad\textnormal{tr}\left(\m{\Theta}_{\m{\alpha\alpha}}- \m{\Theta}_{\m{\alpha\alpha}}^{\textnormal{cons}}\right)>0, \label{3.6}
		\end{aligned}
	\end{equation}
	and
	\begin{equation}
		\begin{aligned}
			\m{\Theta}_{\m{ff}}\succeq \m{\Theta}_{\m{ff}}^{\textnormal{cons}}, \label{3.7}
		\end{aligned}
	\end{equation}
	with equality of \eqref{3.7} if and only if
	\begin{equation}
		\begin{aligned}
			\textnormal{col}(\m{I}_{\m{\alpha}\m{\alpha}}^{-1}\m{I}_{\m{\alpha}\m{f}})\subseteq \textnormal{col}(\m{J}). \label{3.8}
		\end{aligned}
	\end{equation}
\end{proposition}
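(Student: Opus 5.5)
The plan is to work directly with the constrained Fisher information in the reduced parameters $(\m{f},\m{\varphi})$, rather than manipulating \eqref{3.1} abstractly. By the chain rule, the FIM of $[\m{f}^{T},\m{\varphi}^{T}]^{T}$ equals $\m{G}^{T}\m{I}\m{G}$ with $\m{G}=\textnormal{blkdiag}(\m{I}_{K},\m{J})$. This gives
\begin{equation*}
\begin{bmatrix}\m{I}_{\m{ff}}&\m{I}_{\m{f\alpha}}\m{J}\\ \m{J}^{T}\m{I}_{\m{\alpha f}}&\m{J}^{T}\m{I}_{\m{\alpha\alpha}}\m{J}\end{bmatrix},
\end{equation*}
which is positive definite because $\m{G}$ has full column rank. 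The constrained CRB of $(\m{f},\m{\alpha})$ is then $\m{G}(\m{G}^{T}\m{I}\m{G})^{-1}\m{G}^{T}$, which is consistent with Theorem \ref{Thm 1} (the $\m{U}$-form and this form are the standard equivalent expressions). From this representation, $\m{\Theta}^{\textnormal{cons}}_{\m{ff}}$ is the $\m{f}$-block and $\m{\Theta}^{\textnormal{cons}}_{\m{\alpha\alpha}}=\m{J}\m{\Theta}_{\m{\varphi\varphi}}\m{J}^{T}$.

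\emph{Frequency block.} Using block inversion, the CRBs satisfy $\m{\Theta}_{\m{ff}}^{-1}=\m{I}_{\m{ff}}-\m{I}_{\m{f\alpha}}\m{I}_{\m{\alpha\alpha}}^{-1}\m{I}_{\m{\alpha f}}$ and $(\m{\Theta}^{\textnormal{cons}}_{\m{ff}})^{-1}=\m{I}_{\m{ff}}-\m{I}_{\m{f\alpha}}\m{J}(\m{J}^{T}\m{I}_{\m{\alpha\alpha}}\m{J})^{-1}\m{J}^{T}\m{I}_{\m{\alpha f}}$. Set $\m{W}=\m{I}_{\m{\alpha\alpha}}^{1/2}$ and let $\m{P}$ be the orthogonal projector onto $\textnormal{col}(\m{W}\m{J})$. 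Then
\begin{equation*}
(\m{\Theta}^{\textnormal{cons}}_{\m{ff}})^{-1}-\m{\Theta}_{\m{ff}}^{-1}=\m{I}_{\m{f\alpha}}\m{W}^{-1}(\m{I}-\m{P})\m{W}^{-1}\m{I}_{\m{\alpha f}}\succeq\m{0}.
\end{equation*}
Since inversion is order-reversing on positive definite matrices, this yields \eqref{3.7}. Equality holds iff the difference vanishes, i.e., iff $(\m{I}-\m{P})\m{W}^{-1}\m{I}_{\m{\alpha f}}=\m{0}$. This says $\m{W}^{-1}\m{I}_{\m{\alpha f}}$ has columns in $\textnormal{col}(\m{W}\m{J})$. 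Multiplying by $\m{W}^{-1}$, this is equivalent to $\textnormal{col}(\m{I}_{\m{\alpha\alpha}}^{-1}\m{I}_{\m{\alpha f}})\subseteq\textnormal{col}(\m{J})$, which is \eqref{3.8}.

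\emph{Source block.} Eliminating $\m{f}$ in the constrained FIM shows that $\m{\Theta}_{\m{\varphi\varphi}}=(\m{J}^{T}\m{S}_{c}\m{J})^{-1}$, where $\m{S}_{c}=\m{I}_{\m{\alpha\alpha}}-\m{I}_{\m{\alpha f}}\m{I}_{\m{ff}}^{-1}\m{I}_{\m{f\alpha}}$. The same block inversion applied to the unconstrained FIM gives $\m{S}_{c}=\m{\Theta}_{\m{\alpha\alpha}}^{-1}$. Hence
\begin{equation*}
\m{\Theta}_{\m{\alpha\alpha}}-\m{\Theta}^{\textnormal{cons}}_{\m{\alpha\alpha}}=\m{\Theta}_{\m{\alpha\alpha}}^{1/2}(\m{I}-\m{Q})\m{\Theta}_{\m{\alpha\alpha}}^{1/2},
\end{equation*}
where $\m{Q}$ is the orthogonal projector onto $\textnormal{col}(\m{\Theta}_{\m{\alpha\alpha}}^{-1/2}\m{J})$. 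This gives the semidefinite inequality in \eqref{3.6}. For strictness, note that $\m{Q}$ has rank $m<n$, so $\m{I}-\m{Q}$ is a nonzero projector. Its trace contribution is $\textnormal{tr}((\m{I}-\m{Q})\m{\Theta}_{\m{\alpha\alpha}})\geq\lambda_{\min}(\m{\Theta}_{\m{\alpha\alpha}})(n-m)>0$, which proves the strict trace inequality.

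The computations are routine Schur-complement algebra. The step I expect to need the most care is the equality characterization: the equivalence between equal inverses and the projector condition must be stated exactly, and the weighted projection must be translated back into the unweighted column-space inclusion \eqref{3.8}.
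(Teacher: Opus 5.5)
Your proof is correct, but it works in the information domain rather than the covariance domain the paper uses.

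\textbf{The paper's route.} The paper inserts $\tilde{\m{U}}=[\m{0},\m{U}]$ into Theorem~\ref{Thm 1}. It reads off $\m{\Theta}_{\m{ff}}-\m{\Theta}^{\textnormal{cons}}_{\m{ff}}=\m{\Theta}_{\m{f\alpha}}\m{\Upsilon}\m{\Theta}_{\m{\alpha f}}$ and $\m{\Theta}_{\m{\alpha\alpha}}-\m{\Theta}^{\textnormal{cons}}_{\m{\alpha\alpha}}=\m{\Theta}_{\m{\alpha\alpha}}\m{\Upsilon}\m{\Theta}_{\m{\alpha\alpha}}$, with $\m{\Upsilon}=\m{U}^{T}(\m{U}\m{\Theta}_{\m{\alpha\alpha}}\m{U}^{T})^{-1}\m{U}$. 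It then reduces equality to $\m{U}\m{\Theta}_{\m{\alpha f}}=\m{0}$, i.e.\ $\textnormal{col}(\m{\Theta}_{\m{\alpha f}})\subseteq\textnormal{col}(\m{J})$. To reach \eqref{3.8} it needs an extra block-inverse identity, $\m{\Theta}_{\m{\alpha f}}=-\m{I}_{\m{\alpha\alpha}}^{-1}\m{I}_{\m{\alpha f}}\,\textnormal{Sch}^{-1}(\m{I},\m{I}_{\m{\alpha\alpha}})$. Strictness for the $\m{\alpha}$-block comes from $\m{\Theta}_{\m{\alpha\alpha}}\m{U}^{T}\neq\m{0}$.

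\textbf{Your route.} You use the tangent-space form $\m{G}(\m{G}^{T}\m{I}\m{G})^{-1}\m{G}^{T}$ and compare the inverses of the $\m{f}$-blocks. The extra frequency information then appears as the $\m{I}_{\m{\alpha\alpha}}$-weighted projection residual of $\m{I}_{\m{\alpha\alpha}}^{-1}\m{I}_{\m{\alpha f}}$ off $\textnormal{col}(\m{J})$. This has three advantages:
\begin{itemize}
\item it gives \eqref{3.8} directly in FIM terms, without passing through $\m{\Theta}_{\m{\alpha f}}$;
\item it makes the tangent-space interpretation stated after the proposition literal;
\item it yields the quantitative bound $(n-m)\lambda_{\min}(\m{\Theta}_{\m{\alpha\alpha}})$ on the trace gap, rather than mere nonvanishing.
\end{itemize}

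\textbf{What you should add.} You rely on the equivalence of the two forms of the constrained CRB, but you only assert it. Since the paper defines $\m{\Theta}^{\textnormal{cons}}$ via Theorem~\ref{Thm 1}, include the one-line argument. Take $\m{\Theta}=\m{I}^{-1}$. The column spaces of $\m{\Theta}^{-1/2}\m{G}$ and $\m{\Theta}^{1/2}\tilde{\m{U}}^{T}$ are orthogonal, because $\tilde{\m{U}}\m{G}=\m{0}$. Their dimensions are $K+m$ and $n-m$, so their orthogonal projectors sum to the identity. Conjugating by $\m{\Theta}^{1/2}$ then gives $\m{G}(\m{G}^{T}\m{I}\m{G})^{-1}\m{G}^{T}=\m{\Theta}-\m{\Theta}\tilde{\m{U}}^{T}(\tilde{\m{U}}\m{\Theta}\tilde{\m{U}}^{T})^{-1}\tilde{\m{U}}\m{\Theta}$.
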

\begin{proof}
	See Appendix \ref{Appendix A}.
\end{proof}
The condition \eqref{3.8} admits a geometric interpretation. The columns
of $\m{I}_{\m{\alpha}\m{\alpha}}^{-1}\m{I}_{\m{\alpha}\m{f}}$ are nuisance-space directions induced by local changes in the parameters of interest. A nuisance constraint
leaves the frequency CRB unchanged exactly when all these
directions lie in the tangent space $\m{J}$ of the constrained nuisance parameter manifold. Proposition \ref{prop2} indicates that imposing additional constraints on $\m{S}$ reduces the CRB for $\m{S}$. Whether a strict reduction in the CRB for $\m{f}$ can be achieved is determined by the condition in \eqref{3.8}. Building on this result, we obtain the following theorems.
\begin{theorem}[Rank Information]\label{Thm 3}
	Denote $\m{\Theta}^{\textnormal{LR}}$ as the constrained CRB matrix for the parameters $\m{f},\m{\theta}_{\m{S}}$ under the constraint \eqref{2.2.4} and partition it in the same manner as in \eqref{3.4}. We have
	\begin{equation}
		\begin{aligned}
	        \m{\Theta}_{\m{f}\m{f}}=\m{\Theta}_{\m{f}\m{f}}^{\textnormal{LR}}. \label{3.9}
		\end{aligned}
	\end{equation} 
	When $R<\min\{K,L\}$, we have $\textnormal{tr}\left(\m{\Theta}_{\m{\theta}_{\m{S}}\m{\theta}_{\m{S}}}- \m{\Theta}_{\m{\theta}_{\m{S}}\m{\theta}_{\m{S}}}^{\textnormal{LR}}\right)>0$.
\end{theorem}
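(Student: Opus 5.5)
The plan is to obtain Theorem \ref{Thm 3} directly from Proposition \ref{prop2}, applied with $\m{\alpha}=\m{\theta}_{\m{S}}$, $\m{g}=\m{g}_1$ and $\m{J}=\m{J}_1$. The trace statement follows at once from \eqref{3.6}, provided the constraint is a genuine reduction, i.e. $m<n$. Here $m=2KR+2RL-2R^2$ and $n=2KL$, and
$$n-m=2(K-R)(L-R),$$
which is positive exactly when $R<\min\{K,L\}$. So the real work is the equality \eqref{3.9}. By Proposition \ref{prop2}, this amounts to checking the tangent-space condition \eqref{3.8}, $\textnormal{col}(\m{I}_{\m{\alpha}\m{\alpha}}^{-1}\m{I}_{\m{\alpha}\m{f}})\subseteq\textnormal{col}(\m{J}_1)$.

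\textbf{Step 1: compute the nuisance directions.} Using \eqref{2.3.1} and \eqref{2.3.4}, write $\textnormal{vec}(\m{X})=(\m{I}_L\otimes\m{A})\textnormal{vec}(\m{S})$ and $\partial\m{X}/\partial f_k=\m{b}(f_k)\m{S}_{k,:}$. In the real representation this gives
$$\m{I}_{\m{\theta}_{\m{S}}\m{\theta}_{\m{S}}}=\tfrac{2}{\sigma}\mathcal{R}(\m{I}_L\otimes\m{A}^H\m{A}),$$
and the $k$-th column of $\m{I}_{\m{\theta}_{\m{S}}\m{f}}$ equals $\tfrac{2}{\sigma}\mathcal{V}\big(\textnormal{vec}(\m{A}^H\m{b}(f_k)\m{S}_{k,:})\big)$. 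I will check the placement of real and imaginary parts against the convention in \eqref{2.2.1}.

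Positive definiteness of the full FIM forces $\m{A}$ to have full column rank. Hence $\m{A}^H\m{A}$ is invertible, and $\mathcal{R}(\cdot)^{-1}=\mathcal{R}((\cdot)^{-1})$ together with $\mathcal{R}(\m{X})\mathcal{V}(\m{y})=\mathcal{V}(\m{X}\m{y})$ yields the $k$-th column of $\m{I}_{\m{\alpha}\m{\alpha}}^{-1}\m{I}_{\m{\alpha}\m{f}}$ as $\mathcal{V}(\textnormal{vec}(\Delta\m{S}_k))$, where
$$\Delta\m{S}_k=(\m{A}^H\m{A})^{-1}\m{A}^H\m{b}(f_k)\,\m{S}_{k,:}.$$
The key observation is that each $\Delta\m{S}_k$ has the form $\m{W}\m{S}$, with $\m{W}$ a rank-one $K\times K$ matrix. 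In particular, every row of $\Delta\m{S}_k$ lies in the row space of $\m{S}$.

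\textbf{Step 2: show these directions are tangent.} Differentiating $\m{S}=\m{\Phi}\m{M}$ with $\m{M}=[\m{I}_R,\tilde{\m{M}}]$, the range of $\m{J}_1$ is the set of real representations $\mathcal{V}(\textnormal{vec}(\cdot))$ of matrices $\Delta\m{\Phi}\m{M}+\m{\Phi}[\m{0},\Delta\tilde{\m{M}}]$. Here $\Delta\m{\Phi}$ and $\Delta\tilde{\m{M}}$ are arbitrary complex matrices, since their real and imaginary parts are independent coordinates. Choosing $\Delta\tilde{\m{M}}=\m{0}$ and $\Delta\m{\Phi}=\m{W}\m{\Phi}$ gives
$$\Delta\m{\Phi}\m{M}=\m{W}\m{\Phi}\m{M}=\m{W}\m{S}=\Delta\m{S}_k,$$
so each column of $\m{I}_{\m{\alpha}\m{\alpha}}^{-1}\m{I}_{\m{\alpha}\m{f}}$ lies in $\textnormal{col}(\m{J}_1)$. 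Then \eqref{3.8} holds and Proposition \ref{prop2} gives \eqref{3.9}.

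\textbf{Main obstacle.} I expect the main obstacle to be the bookkeeping in Step 1: making sure the real-valued FIM blocks produced by the $\Re[\cdot]$ in \eqref{2.3.1} correspond exactly to $\mathcal{R}(\cdot)$ and $\mathcal{V}(\cdot)$ of the complex quantities. Only then is the inverse-times-cross-term a real-representation of a complex matrix of the form $\m{W}\m{S}$, rather than a mixture involving conjugates. I would verify this by writing $\partial X_{nl}/\partial\m{\theta}_{\m{S}}$ explicitly as $[\m{e}^T,\,i\m{e}^T]^T$-type vectors, and confirming that the resulting Gram matrix is $\mathcal{R}(\m{I}_L\otimes\m{A}^H\m{A})$.
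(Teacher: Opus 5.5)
Your proposal is correct and follows the paper's proof. Your choice $\Delta\m{\Phi}=\m{W}\m{\Phi}$ with $\m{W}=\m{P}_{:,k}\m{e}_k^{T}$ and $\Delta\tilde{\m{M}}=\m{0}$ is exactly the paper's solution $\m{T}_r=\m{P}\m{D}_r$, $\m{T}_{R+1}=\m{0}$ of \eqref{3.20}, read one frequency at a time (using $\m{\Phi}=\m{S}_{:,1:R}$), and your count $n-m=2(K-R)(L-R)$ makes explicit the step the paper leaves implicit when it invokes \eqref{3.6}; the only thing to add is the trivial case $R=\min\{K,L\}$, where $m=n$ so Proposition~\ref{prop2} does not apply, but the two models coincide and \eqref{3.9} is immediate.
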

\begin{theorem}[Coherence Structure]\label{Thm 4}
	Suppose that $D\geq2$ and there exist $d\in\{1,2,...,D\}$ such that $P_{d}>1$. Denote $\m{\Theta}^{\textnormal{Cor}}$ as the constrained CRB matrix for the parameters $\m{f},\m{\theta}_{\m{S}}$ under the constraint \eqref{2.2.7} and partition it in the same manner as in \eqref{3.4}. We have 
	$	\m{\Theta}_{\m{\theta}_{\m{S}}\m{\theta}_{\m{S}}}\succeq \m{\Theta}_{\m{\theta}_{\m{S}}\m{\theta}_{\m{S}}}^{\textnormal{Cor}},\quad \textnormal{tr}\left(\m{\Theta}_{\m{\theta}_{\m{S}}\m{\theta}_{\m{S}}}- \m{\Theta}_{\m{\theta}_{\m{S}}\m{\theta}_{\m{S}}}^{\textnormal{Cor}}\right)>0$ for any pairwise-incoherent $\m{C}$. 
	Moreover,  for generic\footnote{Here, ``generic" means excluding a set of Lebesgue measure zero.} $\left(\m{f},\m{\rho}_{1},...,\m{\rho}_{D}\right)$, we have
	\begin{equation}
		\begin{aligned}
			\m{\Theta}_{\m{f}\m{f}}\succeq \m{\Theta}_{\m{f}\m{f}}^{\textnormal{Cor}},\quad \textnormal{tr}\left(\m{\Theta}_{\m{f}\m{f}}- \m{\Theta}_{\m{f}\m{f}}^{\textnormal{Cor}}\right)>0. \label{3.11}
		\end{aligned}
	\end{equation} 
\end{theorem}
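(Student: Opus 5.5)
The plan is to apply Proposition \ref{prop2} with $\m{\alpha}=\m{\theta}_{\m{S}}$, $\m{g}=\m{g}_{2}$ and $\m{J}=\m{J}_{2}$. The first pair of claims on the $\m{\theta}_{\m{S}}$ block then follows from \eqref{3.6}. For this we only need the FIM \eqref{3.4} to be positive definite and $m<n$. The count $m=2K+2DL-2D<2KL=n$ holds since $D\le K$ and $L\ge 2$, with strictness coming from $P_{d}>1$ or from $L\ge2$. The whole weight of the theorem therefore lies in showing that the tangent-space condition \eqref{3.8} \emph{fails} for generic $(\m{f},\m{\rho}_1,\dots,\m{\rho}_D)$.

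\textbf{Step 1: compute the nuisance directions.} From \eqref{2.3.1}--\eqref{2.3.4}, the complex form of $\m{I}_{\m{\alpha}\m{\alpha}}$ is block diagonal, $\m{I}_L\otimes \m{A}^H\m{A}$. The $\m{f}$-cross term in column $l$ is $\m{A}^H\m{B}\m{D}_l$. Hence the direction in $\m{S}$-space induced by perturbing $f_k$ is the matrix $\m{\Delta}^{(k)}$ whose $l$-th column is $\m{A}^{\dagger}\m{b}(f_k)S_{kl}$. In compact form,
\begin{equation*}
\m{\Delta}^{(k)}=\m{A}^{\dagger}\m{b}(f_k)\,\m{S}_{k,:}=\m{A}^{\dagger}\m{b}(f_k)\,\rho_{dp}\m{C}_{d,:},
\end{equation*}
where $k$ is the index of component $(d,p)$.

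\textbf{Step 2: describe the tangent space.} The tangent space $\textnormal{col}(\m{J}_2)$ consists of all perturbations $\delta\m{\Psi}\,\m{C}+\m{\Psi}\,\delta\m{C}$. Here $\delta\m{\Psi}$ is block-diagonal with the same pattern as $\m{\Psi}$, and $\delta\m{C}$ vanishes at the normalized entries $(d,l_d)$. Condition \eqref{3.8} requires each real direction (real and imaginary parts via $\mathcal{V}$) to lie in this space. It therefore suffices to exhibit one $k$ for which $\m{\Delta}^{(k)}$ is not of this form.

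Take $k$ in group $d$. The row space of $\m{\Delta}^{(k)}$ is spanned by $\m{C}_{d,:}$, and every row of the tangent vector in group $d'$ has the form $\delta\rho_{d'p}\m{C}_{d',:}+\rho_{d'p}\delta\m{C}_{d',:}$. Now restrict to a block $d'\neq d$ with $P_{d'}\ge 1$; the assumption $D\ge2$ guarantees such a block exists.

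\begin{itemize}
\item First suppose $P_{d'}>1$. The rows of $\m{\Delta}^{(k)}$ in block $d'$ equal $[\m{A}^{\dagger}\m{b}(f_k)]_{d'p}\,\rho_{dp_0}\m{C}_{d,:}$. Matching them requires $x_p\m{C}_{d,:}=\delta\rho_{d'p}\m{C}_{d',:}+\rho_{d'p}\m{v}$ for a single common $\m{v}$. Eliminating $\m{v}$ across two indices $p,q$ gives
\begin{equation*}
(x_p\rho_{d'q}-x_q\rho_{d'p})\m{C}_{d,:}\in\textnormal{span}(\m{C}_{d',:}).
\end{equation*}
Pairwise incoherence then forces $x_p\rho_{d'q}=x_q\rho_{d'p}$.
\item If instead the block with $P>1$ is block $d$ itself, the same elimination applies after subtracting the in-span part. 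The case split is on where the group with $P_d>1$ sits.
\end{itemize}

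In either case the equality constraint is a nontrivial analytic equation in $(\m{f},\m{\rho})$: the coefficients $x_p$ are entries of $\m{A}^{\dagger}\m{b}(f_k)$ times $\rho$'s.

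\textbf{Step 3 (main obstacle): genericity.} We must show that the violation of this analytic identity occurs off a measure-zero set. The standard route is that a real-analytic function that is not identically zero has a null zero set. So it suffices to exhibit one parameter point where the identity fails, and the main work is constructing that witness for ULA steering vectors. The first attempt will be a configuration with well-separated frequencies and large $N$, so that $\m{A}^{\dagger}\m{b}(f_k)$ has computable, generically nonzero off-diagonal entries, since $\m{a}$ and $\m{b}$ are linearly independent for Vandermonde structure. The $\rho$'s can then be chosen freely so that $x_p\rho_{d'q}\neq x_q\rho_{d'p}$.

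\textbf{Conclusion.} At such points \eqref{3.8} fails, so Proposition \ref{prop2} gives $\m{\Theta}_{\m{ff}}\succeq\m{\Theta}^{\textnormal{Cor}}_{\m{ff}}$ with inequality. A PSD matrix that is nonzero has positive trace, which yields \eqref{3.11}.
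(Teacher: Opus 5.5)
Your Steps 1--2 follow the paper's route. Proposition~\ref{prop2} gives the $\m{\theta}_{\m{S}}$ claims. Your elimination of the common $\delta\m{C}_{d',:}$, by cross-multiplying two rows of a group $d'$ with $P_{d'}>1$ and then using incoherence of rows $d,d'$, is a row-wise version of the paper's projection by $\m{I}_{P_i}-\m{\rho}_i\m{\rho}_i^{\dagger}$ followed by the $2\times2$ system \eqref{3.33}. It yields the same necessary condition \eqref{3.35}, read column by column: column $k$ of $\m{P}_{d'd}$ must be parallel to $\m{\rho}_{d'}$.

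Two corrections there:
\begin{itemize}
\item Your second bullet is unnecessary and, read literally, false. If $k$ lies in the only group $d$ with $P_d>1$, then $\m{\Delta}^{(k)}$ does lie in the tangent space. Take $\delta\m{C}_{d,:}=\m{0}$ and $\delta\m{\rho}_d$ equal to the group-$d$ coefficients. Match each singleton group $d''$ by fixing $\delta\rho_{d''}$ from the normalized column $l_{d''}$, where $C_{d'',l_{d''}}=1$, and then solving for $\delta\m{C}_{d'',:}$. So no obstruction arises in that case. Simply choose $k$ outside the group with $P_{d'}>1$, which $D\geq 2$ allows.
\item In the dimension count, $n-m=2(K-D)(L-1)$, so strictness needs both $K>D$ and $L\geq 2$, not one or the other. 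The condition $L\geq2$ follows from $D\geq 2$ together with pairwise incoherence.
\end{itemize}

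The genuine gap is Step 3. This is where the theorem's content lies, and you only sketch a plan there.
\begin{itemize}
\item Taking ``large $N$'' is not a free choice, since $N$ is fixed.
\item Linear independence of $\m{a}(f_k)$ and $\m{b}(f_k)$ does not make the group-$d'$ entries of $\m{A}^{\dagger}\m{b}(f_k)$ nonzero. By Lemma~\ref{lemma 9}, that requires $\m{A}_{\mathcal{I}_{d'}}^{H}\m{\Pi}^{\perp}_{\m{A}_{-\mathcal{I}_{d'}}}\m{b}(f_k)\neq\m{0}$, which is a joint condition on all the frequencies.
\end{itemize}
Until you exhibit a point where your identity fails, you have not shown that the analytic function is not identically zero.

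The paper supplies this witness in Lemma~\ref{Lem8}. Take $f_k=(k-1)/N$, which is admissible for $K\leq N$. Then $\m{A}^{H}\m{A}=N\m{I}_K$ and $\m{P}=\m{A}^{H}\m{B}/N$. A geometric-sum computation gives $P_{pq}=-i2\pi/(1-\omega^{q-p})$ for $p\neq q$, with $\omega=e^{i2\pi/N}$, so all off-diagonal entries are nonzero. Now choose any $\rho_{dp_0}\neq 0$ and any $\m{\rho}_{d'}$ not parallel to the nonzero column $k$ of $\m{P}_{d'd}$. Such a choice violates $x_p\rho_{d'q}=x_q\rho_{d'p}$ for some $p,q$. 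With that witness inserted, the rest of your argument goes through.
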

The assumptions in Theorem \ref{Thm 4} are imposed to exclude degenerate cases. When $D=1$, the coherent model degenerates to the low-rank model. When $P_d=1$ for all $d$, no nontrivial multipath-induced coherence exists, and the coherent model degenerates to the general model. Therefore, these assumptions ensure that the coherent structure is nontrivial. \par
	
Theorem \ref{Thm 3} shows that the low-rank constraint preserves all frequency-coupled nuisance directions, leaving the frequency CRB unchanged. In contrast, Theorem \ref{Thm 4} shows that the coherence constraint removes some of these directions, thereby improving the frequency CRB. Overall, both low-rank and coherent structures can reduce the CRB for $\m{S}$; however, the CRB for $\m{f}$ is unaffected by the low-rank structure alone and can be strictly reduced by further exploiting the coherence structure.
\begin{remark}
		It is noted that, although this paper focuses on the coherence model, Theorem \ref{Thm 3} relies only on the low-rank structure of the source matrix. Therefore, it is not restricted to the coherent setting considered here, but are applicable to general low-rank models.
\end{remark}
\subsection{Proof of Theorem \ref{Thm 3}}  \label{proof of 3}
We mainly consider the case $R<\min\{K,L\}$, i.e., the low-rank constraint is nontrivial. When $R=\min\{K,L\}$, the rank constraint is trivial and the low-rank model coincides with the general model; hence the two CRB matrices are identical. The proof of Theorems \ref{Thm 3} is established by showing that the condition \eqref{3.8} in Proposition \ref{prop2} is satisfied. To this end, we first derive the FIM blocks $\m{I}_{\m{\theta}_{\m{S}}\m{\theta}_{\m{S}}}$ and $\m{I}_{\m{\theta}_{\m{S}}\m{f}}$. It is obtained from (\ref{2.3.4}) that $\m{I}_{\m{\theta}_{\m{S}}\m{\theta}_{\m{S}}}=\mathcal{R}(\m{H})$ and $\m{I}_{\m{\theta}_{\m{S}}\m{f}}=\mathcal{V}(\m{L})$,
with
\begin{equation}
	\begin{aligned}
		\m{H}=\begin{bmatrix}
			\m{A}^{H}\m{A}& & & \\
			&\m{A}^{H}\m{A}& &\\
			& & \ddots &\\
			& & & \m{A}^{H}\m{A}
		\end{bmatrix}\in\mathbb{C}^{KL\times KL}, \label{3.12}
	\end{aligned}
\end{equation}
and
\begin{equation}
	\begin{aligned}
		\m{L}=\begin{bmatrix}
			\m{A}^{H}\m{B}\m{D}_{1}\\
			\m{A}^{H}\m{B}\m{D}_{2}\\
			\vdots\\
			\m{A}^{H}\m{B}\m{D}_{L}
		\end{bmatrix}\in\mathbb{C}^{KL\times K}, \label{3.13}
	\end{aligned}
\end{equation}
where where the common scaling factor $\frac{2}{\sigma}$ is omitted for notational simplicity.
We now compute the Jacobian matrix $\m{J}_{1}$, for which the required partial derivatives are given below 
\begin{subequations} \label{3.14}
	\begin{align}
		\frac{\partial \m{S}_{:,i}}{\partial \m{\Phi}_{:,j}}&=\m{I}_{K}\delta_{ij}, &\quad&i=1,2,...,R, \label{3.14a}\\
		\frac{\partial \m{S}_{:,i}}{\partial \m{\Phi}_{:,j}}&=\tilde{M}_{j,i-R}\m{I}_{K}\delta_{ij}, &\quad &i=R+1,...,L, \label{3.14b}\\
		\frac{\partial \m{S}_{:,i}}{\partial \tilde{\m{M}}_{:,j}}&=\m{0}_{K\times R},  &\quad&i=1,2,...,R, \label{3.14c} \\
		\frac{\partial \m{S}_{:,i}}{\partial \tilde{\m{M}}_{:,j}}&=\m{\Phi}\delta_{i-R,j},  &\quad&i=R+1,...,L. \label{3.14d}
	\end{align}
\end{subequations} 
Summarizing \eqref{3.14}, we get $\m{J}_{1}$ admits the form of $\m{J}_{1}=\mathcal{R}(\m{K}_{1})$
with
\begin{equation}
	\begin{aligned}
		\m{K}_{1}=\begin{bmatrix}
			\m{I}_{KR}&\m{0}_{KR\times R(L-R)}\\
			\tilde{\m{M}}^{T}\otimes\m{I}_{K}&\m{I}_{L-R}\otimes\m{\Phi}
		\end{bmatrix}. \label{3.15}
	\end{aligned}
\end{equation}
To facilitate the analysis, we present the following lemmas.
\begin{lemma}\label{Lem5}
	For matrices $\m{A}\in\mathbb{C}^{m\times n},\m{B}\in\mathbb{C}^{n\times l}$, we have
	\begin{subequations}\label{3.16}
		\begin{align}
			\mathcal{R}(\m{A}\m{B})&=\mathcal{R}(\m{A})\cdot\mathcal{R}(\m{B}).\label{3.16a}\\
			\mathcal{V}(\m{A}\m{B})&=\mathcal{R}(\m{A})\cdot\mathcal{V}(\m{B}). \label{3.16b}
		\end{align}
	\end{subequations}
\end{lemma}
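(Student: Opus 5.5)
The plan is a direct computation: write $\m{A}=\m{A}_{1}+i\m{A}_{2}$ and $\m{B}=\m{B}_{1}+i\m{B}_{2}$ with $\m{A}_{1},\m{A}_{2},\m{B}_{1},\m{B}_{2}$ real. Both identities then reduce to the fact that the map $\m{X}\mapsto\mathcal{R}(\m{X})$ turns complex multiplication into real block-matrix multiplication. No earlier result of the paper is needed; only the definitions of $\mathcal{R}(\cdot)$ and $\mathcal{V}(\cdot)$ given in the notation paragraph are used.

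For \eqref{3.16a}, first expand the product:
\begin{equation*}
\m{A}\m{B}=(\m{A}_{1}\m{B}_{1}-\m{A}_{2}\m{B}_{2})+i(\m{A}_{1}\m{B}_{2}+\m{A}_{2}\m{B}_{1}),
\end{equation*}
so that $\Re(\m{A}\m{B})=\m{A}_{1}\m{B}_{1}-\m{A}_{2}\m{B}_{2}$ and $\Im(\m{A}\m{B})=\m{A}_{1}\m{B}_{2}+\m{A}_{2}\m{B}_{1}$. Next, multiply the $2\times2$ block matrices
\begin{equation*}
\mathcal{R}(\m{A})\mathcal{R}(\m{B})=\begin{bmatrix}\m{A}_{1}&-\m{A}_{2}\\ \m{A}_{2}&\m{A}_{1}\end{bmatrix}\begin{bmatrix}\m{B}_{1}&-\m{B}_{2}\\ \m{B}_{2}&\m{B}_{1}\end{bmatrix}.
\end{equation*}
The $(1,1)$ block is $\m{A}_{1}\m{B}_{1}-\m{A}_{2}\m{B}_{2}$ and the $(2,1)$ block is $\m{A}_{2}\m{B}_{1}+\m{A}_{1}\m{B}_{2}$. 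The $(1,2)$ block is $-\m{A}_{1}\m{B}_{2}-\m{A}_{2}\m{B}_{1}$ and the $(2,2)$ block is $-\m{A}_{2}\m{B}_{2}+\m{A}_{1}\m{B}_{1}$. These four blocks are exactly $\Re(\m{A}\m{B})$, $\Im(\m{A}\m{B})$, $-\Im(\m{A}\m{B})$ and $\Re(\m{A}\m{B})$, which is the block layout of $\mathcal{R}(\m{A}\m{B})$. The block sizes are compatible: $\mathcal{R}(\m{A})$ is $2m\times2n$ and $\mathcal{R}(\m{B})$ is $2n\times2l$.

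For \eqref{3.16b}, note that $\mathcal{V}(\m{X})$ is precisely the first block column of $\mathcal{R}(\m{X})$. Taking the first block column of both sides of \eqref{3.16a} gives $\mathcal{V}(\m{A}\m{B})=\mathcal{R}(\m{A})\,\mathcal{V}(\m{B})$ immediately. The alternative is to multiply $\mathcal{R}(\m{A})$ by $[\m{B}_{1}^{T},\m{B}_{2}^{T}]^{T}$ directly, which gives the same two blocks.

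There is no real obstacle here. The only point needing care is the sign bookkeeping in the off-diagonal blocks, to confirm that the $(1,2)$ block comes out as $-\Im(\m{A}\m{B})$ and not $+\Im(\m{A}\m{B})$.
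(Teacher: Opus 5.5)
Your proposal is correct and follows essentially the same route as the paper: a direct block-matrix multiplication of $\mathcal{R}(\m{A})\mathcal{R}(\m{B})$, with the four blocks matched against $\Re(\m{A}\m{B})$ and $\Im(\m{A}\m{B})$. For \eqref{3.16b}, reading it off as the first block column of \eqref{3.16a} is a harmless shortcut for the paper's direct product $\mathcal{R}(\m{A})\mathcal{V}(\m{B})$.
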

\begin{proof}
	See Appendix \ref{Appendix B}.
\end{proof}
\begin{lemma}\label{Lem6}
	For an invertible matrix $\m{A}\in\mathbb{C}^{m\times m}$, we have
	\begin{equation}
		\begin{aligned}
			\mathcal{R}(\m{A})^{-1}=\mathcal{R}(\m{A}^{-1}).\label{3.17}
		\end{aligned}
	\end{equation}
\end{lemma}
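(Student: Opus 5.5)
We need to prove Lemma 6: for invertible $\m{A}\in\mathbb{C}^{m\times m}$, $\mathcal{R}(\m{A})^{-1}=\mathcal{R}(\m{A}^{-1})$. The plan is to use the multiplicativity in Lemma \ref{Lem5} together with the fact that $\mathcal{R}$ maps the identity to the identity.

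First, observe directly from the definition that $\mathcal{R}(\m{I}_{m})=\begin{bmatrix}\m{I}_{m}&\m{0}\\ \m{0}&\m{I}_{m}\end{bmatrix}=\m{I}_{2m}$, since $\Re(\m{I}_{m})=\m{I}_{m}$ and $\Im(\m{I}_{m})=\m{0}$.

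Next, apply \eqref{3.16a} of Lemma \ref{Lem5} with $\m{B}=\m{A}^{-1}$. This gives
\begin{equation*}
\mathcal{R}(\m{A})\,\mathcal{R}(\m{A}^{-1})=\mathcal{R}(\m{A}\m{A}^{-1})=\mathcal{R}(\m{I}_{m})=\m{I}_{2m}.
\end{equation*}
Applying \eqref{3.16a} again with the roles of $\m{A}$ and $\m{A}^{-1}$ swapped gives $\mathcal{R}(\m{A}^{-1})\,\mathcal{R}(\m{A})=\m{I}_{2m}$.

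Hence $\mathcal{R}(\m{A})$ is a square $2m\times 2m$ real matrix with a two-sided inverse, namely $\mathcal{R}(\m{A}^{-1})$. For square matrices the one-sided identity already suffices, and uniqueness of the inverse yields \eqref{3.17}.

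There is no real obstacle here. The only point needing care is that Lemma \ref{Lem5} is stated for general conformable sizes, so it applies with $n=l=m$. The check $\mathcal{R}(\m{I}_{m})=\m{I}_{2m}$ is immediate from the definition.
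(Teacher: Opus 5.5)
Your proof is correct and is the same argument the paper intends when it says the result ``follows directly from Lemma \ref{Lem5}'': apply \eqref{3.16a} to $\m{A}\m{A}^{-1}=\m{I}_{m}$ and use $\mathcal{R}(\m{I}_{m})=\m{I}_{2m}$. You have simply written out the one-line argument the paper leaves implicit.
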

\begin{proof}
	The result follows directly from Lemma \ref{Lem5}.
\end{proof}
\begin{lemma}\label{Lem7}
	For $\m{A}\in\mathbb{C}^{m\times n}$ and $\m{B}\in\mathbb{C}^{m\times l}$, $\textnormal{col}(\m{A})\subseteq\textnormal{col}(\m{B})$ if and only if $\textnormal{col}(\mathcal{V}(\m{A}))\subseteq\textnormal{col}(\mathcal{R}(\m{B}))$.
\end{lemma}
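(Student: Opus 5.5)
The plan is to reduce the statement to a set-by-set correspondence between a complex column space and a real column space. Everything rests on the identification $\phi:\mathbb{C}^{m}\to\mathbb{R}^{2m}$, $\phi(\m{z})=\mathcal{V}(\m{z})$, which is an $\mathbb{R}$-linear bijection. The first fact I will establish is the identity
\begin{equation*}
\mathcal{R}(\m{B})\,\mathcal{V}(\m{w})=\mathcal{V}(\m{B}\m{w}),\qquad \m{w}\in\mathbb{C}^{l},
\end{equation*}
which is simply Lemma \ref{Lem5}, \eqref{3.16b}, applied with $\m{A}=\m{B}$ and $\m{B}=\m{w}$.

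Next I will show that $\textnormal{col}(\mathcal{R}(\m{B}))=\phi(\textnormal{col}(\m{B}))$.
\begin{itemize}
\item For "$\supseteq$": every $\m{B}\m{w}$ with complex $\m{w}$ satisfies $\mathcal{V}(\m{B}\m{w})=\mathcal{R}(\m{B})\mathcal{V}(\m{w})\in\textnormal{col}(\mathcal{R}(\m{B}))$.
\item For "$\subseteq$": any real vector $\m{v}\in\mathbb{R}^{2l}$ can be written as $\m{v}=\mathcal{V}(\m{w})$ with $\m{w}=\m{v}_{1:l}+i\m{v}_{l+1:2l}$. Hence $\mathcal{R}(\m{B})\m{v}=\mathcal{V}(\m{B}\m{w})$.
\end{itemize}
So $\textnormal{col}(\mathcal{R}(\m{B}))$ is exactly the image under $\phi$ of the complex subspace $\textnormal{col}(\m{B})$, viewed as a real subspace.

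Similarly, I will show $\textnormal{col}(\mathcal{V}(\m{A}))=\textnormal{span}_{\mathbb{R}}\{\phi(\m{A}_{:,j})\}_{j=1}^{n}=\phi(\textnormal{span}_{\mathbb{R}}\{\m{A}_{:,j}\})$. This holds because $\mathcal{V}(\m{A})$ is the real matrix whose $j$-th column is $\mathcal{V}(\m{A}_{:,j})$, and $\phi$ is $\mathbb{R}$-linear.

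With these two descriptions, the equivalence follows.
\begin{itemize}
\item If $\textnormal{col}(\m{A})\subseteq\textnormal{col}(\m{B})$, then each column $\m{A}_{:,j}$ lies in $\textnormal{col}(\m{B})$. Therefore $\mathcal{V}(\m{A}_{:,j})\in\phi(\textnormal{col}(\m{B}))=\textnormal{col}(\mathcal{R}(\m{B}))$, and real linear combinations stay in that real subspace.
\item Conversely, if $\textnormal{col}(\mathcal{V}(\m{A}))\subseteq\textnormal{col}(\mathcal{R}(\m{B}))$, then $\phi(\m{A}_{:,j})\in\phi(\textnormal{col}(\m{B}))$ for each $j$. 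Injectivity of $\phi$ gives $\m{A}_{:,j}\in\textnormal{col}(\m{B})$. Since $\textnormal{col}(\m{B})$ is a complex subspace, it contains all complex combinations of these columns, so $\textnormal{col}(\m{A})\subseteq\textnormal{col}(\m{B})$.
\end{itemize}

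The only real subtlety, and the step I would write most carefully, is the asymmetry between $\mathcal{V}(\m{A})$ and $\mathcal{R}(\m{B})$. The left side only encodes the real span of the columns of $\m{A}$, not its complex span. The converse direction must therefore use that $\textnormal{col}(\m{B})$ is closed under multiplication by $i$. This closure is exactly what the second block column $\begin{bmatrix}-\Im(\m{B})\\ \Re(\m{B})\end{bmatrix}=\mathcal{V}(i\m{B})$ of $\mathcal{R}(\m{B})$ encodes. I will make this explicit so the reader sees why $\mathcal{R}$ is needed on one side and $\mathcal{V}$ suffices on the other.
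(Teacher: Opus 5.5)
Your proof is correct and takes essentially the same route as the paper's. Both arguments rest on the identity $\mathcal{R}(\m{B})\mathcal{V}(\m{T})=\mathcal{V}(\m{B}\m{T})$ from Lemma~\ref{Lem5}, on the fact that every real matrix with $2l$ rows equals $\mathcal{V}$ of some complex matrix, and on the injectivity of $\mathcal{V}$; you argue column by column through the set identity $\textnormal{col}(\mathcal{R}(\m{B}))=\mathcal{V}(\textnormal{col}(\m{B}))$, while the paper packages the same steps in a single coefficient matrix $\m{T}=\m{W}_{1}+i\m{W}_{2}$.
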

\begin{proof}
	See Appendix \ref{Appendix C}.
\end{proof}
To prove Theorem \ref{Thm 3}, it suffices to show that there exists a matrix $\m{T}\in\mathbb{C}^{(KR+RL-R^{2})\times K}$ such that  
\begin{equation}
	\begin{aligned}
		\m{H}^{-1}\m{L}=\m{K}_{1}\m{T}.\label{3.20}
	\end{aligned}
\end{equation}
Let $\m{P}=\left(\m{A}^{H}\m{A}\right)^{-1}\m{A}^{H}\m{B}$ and partition $\m{T}$ as 
\begin{equation}
	\begin{aligned}
		\m{T}=\begin{bmatrix}
			\m{T}_{1}^{T}&
			\cdots&
			\m{T}_{R}^{T}&
			\m{T}_{R+1}^{T}
		\end{bmatrix}^{T}, \label{3.21} 
	\end{aligned}
\end{equation}
where $\m{T}_{r}\in\mathbb{C}^{K\times K}$ and $\m{T}_{R+1}$ has the remaining compatible dimension. We now show that 
\begin{equation}
	\begin{aligned}
		\m{T}_{r}=\m{P}\m{D}_{r},\quad r=1,2,...,R, \quad
		\m{T}_{R+1}=\m{0} \label{3.22}
	\end{aligned}
\end{equation}
is exactly the solution to (\ref{3.20}).
By performing block matrix multiplication and setting $\m{T}_{R+1}=\m{0}$, (\ref{3.20}) is equivalent to
\begin{subequations}
	\begin{align}
		\m{P}\m{D}_{r}&=\m{T}_{r} &\quad &r=1,2,...,R, \label{3.23a}\\
		\m{P}\m{D}_{r}&=\left(\tilde{\m{M}}_{:,r-R}^{T}\otimes \m{I}_{K}\right)\begin{bmatrix}
			\m{T}_{1}\\
			\vdots\\
			\m{T}_{R}
		\end{bmatrix}&\quad &r=R+1,...,L. \label{3.23b}
	\end{align}
\end{subequations}
Following the factorization (\ref{2.2.2}) and the definition of $\tilde{\m{M}}$, the right side of (\ref{3.23b}) equals
\begin{equation}
	\begin{aligned}
		\sum_{i=1}^{R}\tilde{M}_{i,r-R}\m{T}_{r}&=\sum_{i=1}^{R}\tilde{M}_{i,r-R}\m{P}\m{D}_{r}\quad (\textnormal{based on (\ref{3.23a})})\\
		&=\m{P}\sum_{i=1}^{R}\tilde{M}_{i,r-R}\m{D}_{r}\\
		&=\m{P}\cdot\textnormal{diag}\left(\sum_{i=1}^{R}\tilde{M}_{i,r-R}\m{S}_{:,i}\right)\\
		&=\m{P}\cdot\textnormal{diag}(\m{S}_{:,r})\quad (\textnormal{based on (\ref{2.2.2})})\\
		&=\m{P}\m{D}_{r}.\label{3.24}
	\end{aligned}
\end{equation}
Therefore, (\ref{3.23a}) and (\ref{3.23b}) hold for the $\m{T}$ defined in (\ref{3.22}), completing the proof.
\subsection{Proof of Theorem \ref{Thm 4}} \label{proof of 4}
To calculate $\m{J}_{2}$, we define $\m{\Gamma}_{l}=\left[\m{e}_{d}\right]_{d\in\mathcal{C}_{l}}\in\mathbb{R}^{D\times |\mathcal{C}_{l}|}$, where $\m{e}_{d}$ denotes the $d$-th canonical basis vector in $\mathbb{R}^{D}$. Then we obtain that $\m{\Gamma}_{l}^{T}\m{C}_{:,l}=\tilde{\m{c}}_{l}$. The partial derivatives required in $\m{J}_{2}$ are given below
\begin{subequations}\label{3.25}
	\begin{align}
		\frac{\partial \m{S}_{:,i}}{\partial \m{\rho}} &= \m{\Xi}_{i}, &\quad & i=1,2,...,L \label{3.25b}\\
		\frac{\partial\m{S}_{:,i}}{\partial \tilde{\m{c}}_{j}} & = \m{\Psi}\m{\Gamma}_{j}\delta_{i,j}, &\quad &\forall i,j,\label{3.25c}
	\end{align}
\end{subequations}
where $\m{\rho}=\left[\m{\rho}_{1}^{T},\m{\rho}_{2}^{T},...,\m{\rho}_{D}^{T}\right]^{T}$ and 
\begin{equation}
	\begin{aligned}
		\m{\Xi}_{i}=\textnormal{blkdiag}(C_{1i}\m{I}_{P_{1}},C_{2i}\m{I}_{P_{2}},...,C_{Di}\m{I}_{P_{D}})\label{3.23}
	\end{aligned}
\end{equation}
is a diagonal matrix. Based on \eqref{3.25}, $\m{J}_{2}$ admits the form of $\m{J}_{2}=\mathcal{R}(\m{K}_{2})$
with
\begin{equation}
	\begin{aligned}
		\m{K}_{2}=\begin{bmatrix}
			\m{\Xi}_{1}&\m{\Psi}\m{\Gamma}_{1}&\m{0}_{K\times |\mathcal{C}_{2}|}&\cdots&\m{0}_{K\times |\mathcal{C}_{L}|}\\
			\m{\Xi}_{2}&\m{0}_{K\times |\mathcal{C}_{1}|}&\m{\Psi}\m{\Gamma}_{2}&\cdots&\m{0}_{K\times |\mathcal{C}_{L}|}\\
			\vdots&\vdots&\vdots&\ddots&\vdots\\
			\m{\Xi}_{L}&\m{0}_{K\times |\mathcal{C}_{1}|}&\m{0}_{K\times |\mathcal{C}_{2}|}&\cdots&\m{\Psi}\m{\Gamma}_{L}
		\end{bmatrix}.\label{3.26}
	\end{aligned}
\end{equation}
We prove Theorem \ref{Thm 4} by contradiction. Suppose that there exists a matrix $\m{T}\in\mathbb{C}^{(K+DL-D)\times K}$ such that
\begin{equation}
	\begin{aligned}
		\m{H}^{-1}\m{L}=\m{K}_{2}\m{T}.\label{3.27}
	\end{aligned}
\end{equation}
We partition $\m{T}$ as 
\begin{equation}
	\begin{aligned}
		\m{T}=\begin{bmatrix}
			\m{T}_{\m{\rho}}^{T}&
			\tilde{\m{T}}_{1}^{T}&
			\tilde{\m{T}}_{2}^{T}&
			\cdots&
			\tilde{\m{T}}_{L}^{T}
		\end{bmatrix}^{T},\label{3.28}
	\end{aligned}
\end{equation}
where $\m{T}_{\m{\rho}}\in\mathbb{C}^{K\times K}$, and $\tilde{\m{T}}_{l}\in\mathbb{C}^{|\mathcal{C}_{l}|\times K}$ for $l=1,2,...,L$. Let $\m{T}_{l}=\m{\Gamma}_{l}\tilde{\m{T}}_{l}\in\mathbb{C}^{D\times K}$. Then (\ref{3.27}) is equivalent to 
\begin{equation}
	\begin{aligned}
		\m{P}\m{D}_{l}=\m{\Xi}_{l}\m{T}_{\m{\rho}}+\m{\Psi}\m{T}_{l},\quad l=1,2,...,L.  \label{3.29}
	\end{aligned}
\end{equation}
We partition the index set $\{1,2,...,K\}$ into $D$ consecutive subsets $\mathcal{I}_{1},\mathcal{I}_{2},...,\mathcal{I}_{D}$, where $|\mathcal{I}_{d}|=P_{d}$, and partition $\m{P}$ accordingly as 
\begin{equation}
	\begin{aligned}
		\begin{bmatrix}
			\m{P}_{11}&\m{P}_{12}&\cdots&\m{P}_{1D}\\
			\m{P}_{21}&\m{P}_{22}&\cdots&\m{P}_{2D}\\
			\vdots&\vdots&\ddots&\vdots\\
			\m{P}_{D1}&\m{P}_{D2}&\cdots&\m{P}_{DD}
		\end{bmatrix}, \label{3.30}
	\end{aligned}
\end{equation}
where $\m{P}_{ij}\in\mathbb{C}^{P_{i}\times P_{j}}$. Suppose that $P_{i}>1$, and choose any $j\neq i$. Taking the $(i,j)$-th block of \eqref{3.29} gives that
\begin{equation}
	\begin{aligned}
		C_{jl}\m{P}_{ij}\textnormal{diag}(\m{\rho}_{j})=C_{il}\left(\m{T}_{\m{\rho}}\right)_{\mathcal{I}_{i},\mathcal{I}_{j}}+\m{\rho}_{i}\left(\m{T}_{l}\right)_{i,\mathcal{I}_{j}}, \label{3.31}
	\end{aligned}
\end{equation}
for $l=1,2,...,L$. Multiplying both sides by $\m{I}_{P_{i}}-\m{\rho}_{i}\m{\rho}_{i}^{\dagger}$ yields that 
\begin{equation}
	\begin{aligned}
		C_{jl}\left(\m{I}_{P_{i}}-\m{\rho}_{i}\m{\rho}_{i}^{\dagger}\right)\m{P}_{ij}\textnormal{diag}(\m{\rho}_{j})=C_{il}\left(\m{I}_{P_{i}}-\m{\rho}_{i}\m{\rho}_{i}^{\dagger}\right)\left(\m{T}_{\m{\rho}}\right)_{\mathcal{I}_{i},\mathcal{I}_{j}}. \label{3.32}
	\end{aligned}
\end{equation}
Since the $i$-th row and the $j$-th row of $\m{C}$ are incoherent, there exist $l_{1}$ and $l_{2}$, such that $C_{il_{1}}C_{jl_{2}}-C_{il_{2}}C_{jl_{1}}\neq0.$ Applying \eqref{3.32} for $l=l_{1}$ and $l=l_{2}$, we obtain 
\begin{equation}
	\begin{aligned}
		\begin{bmatrix}
			C_{jl_{1}}\m{I}_{P_{i}} & -C_{il_{1}}\m{I}_{P_{i}}\\
			C_{jl_{2}}\m{I}_{P_{i}} & -C_{il_{2}}\m{I}_{P_{i}}
		\end{bmatrix} \begin{bmatrix}
		\left(\m{I}_{P_{i}}-\m{\rho}_{i}\m{\rho}_{i}^{\dagger}\right)\m{P}_{ij}\textnormal{diag}(\m{\rho}_{j})\\
		\left(\m{I}_{P_{i}}-\m{\rho}_{i}\m{\rho}_{i}^{\dagger}\right)\left(\m{T}_{\m{\rho}}\right)_{\mathcal{I}_{i},\mathcal{I}_{j}}
		\end{bmatrix}=\begin{bmatrix}
		\m{0}\\
		\m{0}
		\end{bmatrix}. \label{3.33}
	\end{aligned}
\end{equation} 
The coefficient matrix in \eqref{3.33} is non-singular since $C_{il_{1}}C_{jl_{2}}-C_{il_{2}}C_{jl_{1}}\neq0$. Hence we have
\begin{equation}
	\begin{aligned}
		\left(\m{I}_{P_{i}}-\m{\rho}_{i}\m{\rho}_{i}^{\dagger}\right)\m{P}_{ij}=\m{0}_{P_{i},P_{j}}.\label{3.35}
	\end{aligned}
\end{equation}
We prove Theorem \ref{Thm 4} by showing that \eqref{3.35} does not hold for generic parameter values. It can be calculated that
\begin{subequations}\label{3.36}
	\begin{align}
		\left[\m{A}^{H}\m{A}\right]_{ij}&=\sum_{n=1}^{N}e^{i2\pi(n-1)(f_{j}-f_{i})},\label{3.36a}\\
		\left[\m{A}^{H}\m{B}\right]_{ij}&=i2\pi\sum_{n=1}^{N}(n-1)e^{i2\pi(n-1)(f_{j}-f_{i})},\label{3.36b}
	\end{align}
\end{subequations}
where are both analytic functions of $\m{f}$. Then we get that each element of $\m{P}$ is analytic for $\m{f}$. In addition, the $(k,l)$-element of the matrix $\m{\rho}_{i}\m{\rho}_{i}^{\dagger}$ is
\begin{equation}
	\begin{aligned}
		\frac{\left(\m{\rho}_{i}\right)_{k}\overline{\left(\m{\rho}_{i}\right)_{l}}}{\left|\left(\m{\rho}_{i}\right)_{1}\right|^{2}+\left|\left(\m{\rho}_{i}\right)_{2}\right|^{2}+\cdots+\left|\left(\m{\rho}_{i}\right)_{P_{i}}\right|^{2}},\label{3.37}
	\end{aligned}
\end{equation}
whose real and imaginary parts are analytic for $\m{\rho}$. Therefore, we conclude that the real and imaginary parts of each element of  $\left(\m{I}_{P_{i}}-\m{\rho}_{i}\m{\rho}_{i}^{\dagger}\right)\m{P}_{ij}$ are analytic for $\m{f}$ and $\m{\rho}$.
Since a nontrivial analytic function vanishes only on a set of Lebesgue measure zero, it suffices to construct $\m{f}$ and $\m{\rho}$ such that $\left(\m{I}_{P_{i}}-\m{\rho}_{i}\m{\rho}_{i}^{\dagger}\right)\m{P}_{ij}$ is not a zero matrix. Indeed, whenever $\m{P}_{ij}\neq \m{0}$, one can choose $\m{\rho}_{i}$ such that it does not lie in the one-dimensional subspace spanned by any nonzero column of $\m{P}_{ij}$. Hence, it suffices to construct $\m{f}$ such that $\m{P}_{ij}\neq \m{0}$, which is established in Lemma \ref{Lem8}. This completes the proof of Theorem \ref{Thm 4}.
\begin{lemma}\label{Lem8}
	For generic $\m{f}$, the matrix $\m{P}_{ij}$ is nonzero fro any $i,j$.
\end{lemma}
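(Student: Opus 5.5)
The plan is to use analyticity in $\m{f}$, exactly as in the proof of Theorem~\ref{Thm 4}: a nonzero analytic function vanishes only on a Lebesgue-null set, so it is enough to exhibit a single frequency vector at which some entry of the block $\m{P}_{ij}$, $i\neq j$, is nonzero.

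\emph{Step 1: reduce to a single point.} Throughout we need $K\le N$; otherwise $\m{A}^{H}\m{A}$, and hence $\m{H}$, is singular and the FIM in Proposition~\ref{prop2} is not positive definite. Write
\begin{equation*}
\m{P}=\left(\m{A}^{H}\m{A}\right)^{-1}\m{A}^{H}\m{B}=\frac{\textnormal{adj}\left(\m{A}^{H}\m{A}\right)\m{A}^{H}\m{B}}{\det\left(\m{A}^{H}\m{A}\right)}.
\end{equation*}
By \eqref{3.36}, the real and imaginary parts of both $\det(\m{A}^{H}\m{A})$ and every entry of $\m{N}\triangleq\textnormal{adj}(\m{A}^{H}\m{A})\m{A}^{H}\m{B}$ are real-analytic on all of $\mathbb{R}^{K}$. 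Since $\mathbb{R}^{K}$ is connected, each such function is either identically zero or vanishes only on a null set. Now suppose we find one point $\m{f}^{\star}$ at which $\det(\m{A}^{H}\m{A})\neq0$ and some entry of the block $\m{N}_{\mathcal{I}_{i},\mathcal{I}_{j}}$ is nonzero. Then, outside the union of two null sets, both the determinant and that entry are nonzero, and so $\m{P}_{ij}\neq\m{0}$. There are finitely many pairs $(i,j)$, so taking the union of the exceptional sets over all of them keeps the set null.

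\emph{Step 2: the witness point.} Take equispaced frequencies $f_{k}^{\star}=(k-1)/N$ for $k=1,\dots,K$, shifted by a common constant if needed to lie in $[-\tfrac12,\tfrac12)$. Because $K\le N$, the columns of $\m{A}$ are distinct DFT columns, so $\m{A}^{H}\m{A}=N\m{I}_{K}$, which is invertible. Hence $\m{P}=\m{A}^{H}\m{B}/N$. For $k\neq l$ put $\omega=e^{i2\pi(f_{l}-f_{k})}$; this is an $N$-th root of unity with $\omega\neq1$. By \eqref{3.36b},
\begin{equation*}
[\m{A}^{H}\m{B}]_{kl}=i2\pi\sum_{n=0}^{N-1}n\,\omega^{n}=\frac{i2\pi N}{\omega-1},
\end{equation*}
using the identity $\sum_{n=0}^{N-1}n\omega^{n}=N/(\omega-1)$ valid for $\omega^{N}=1$, $\omega\neq1$. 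This value is nonzero. So every off-diagonal entry of $\m{P}$ is nonzero at $\m{f}^{\star}$, and in particular every block $\m{P}_{ij}$ with $i\neq j$ is nonzero there.

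\emph{Expected obstacle.} The only delicate point is making the analyticity argument rigorous, because $\m{P}$ is only defined where $\m{A}^{H}\m{A}$ is invertible. Working with the adjugate numerator, which is analytic on the whole connected space $\mathbb{R}^{K}$, avoids any question about the connectedness of the domain. The explicit root-of-unity computation in Step~2 is then routine.
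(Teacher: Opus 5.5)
Your proof is correct and follows the paper's route. The paper's proof of this lemma uses the same witness $f_k=(k-1)/N$, obtains $\m{A}^{H}\m{A}=N\m{I}_K$, and evaluates the off-diagonal entries of $\m{A}^{H}\m{B}$ by the same root-of-unity sum. To pass from one point to generic $\m{f}$, it relies on the analyticity remark in the proof of Theorem~\ref{Thm 4}.

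Your adjugate device tightens that remark. $\m{P}$ itself is only analytic where $\m{A}$ has full column rank. For the ULA this is the complement of the hyperplanes $f_k-f_l\in\mathbb{Z}$, which is disconnected. So the identity principle applied directly to $\m{P}$ only carries the witness within its own component. Working with $\det(\m{A}^{H}\m{A})$ and $\textnormal{adj}(\m{A}^{H}\m{A})\m{A}^{H}\m{B}$ on all of $\mathbb{R}^K$ removes this issue.

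There is one small omission. The lemma is stated for any $i,j$, including $i=j$. When $P_i=1$, the block $\m{P}_{ii}$ is a single diagonal entry of $\m{P}$, and your Step~2 never evaluates it. Add, as the paper does, $[\m{A}^{H}\m{B}]_{kk}=i2\pi\sum_{n=0}^{N-1}n=i\pi N(N-1)\neq 0$. Then every entry of $\m{P}$ is nonzero at the witness and all blocks are covered. Theorem~\ref{Thm 4} only uses $i\neq j$, so nothing downstream depends on this.
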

\begin{proof}
	See Appendix \ref{Appendix D}.
\end{proof}

	It is worth noting that the genericity requirement in Theorem 3 is essential, since there exist exceptional parameter values for which the two frequency CRBs coincide. Take $l=l_{i}$ in the equation \eqref{3.31}. By the definition of $\m{\Gamma}_{l}$ and $\m{T}_{l}=\m{\Gamma}_{l}\tilde{\m{T}}_{l}$, we have $\left(\m{T}_{l}\right)_{i,:}=\m{0}$ and $C_{il_{i}}=1$, thus
	\begin{equation}
		\begin{aligned}
			C_{jl_{i}}\m{P}_{ij}\textnormal{diag}(\m{\rho}_{j})=\left(\m{T}_{\m{\rho}}\right)_{\mathcal{I}_{i},\mathcal{I}_{j}}. 
		\end{aligned}
	\end{equation}
	Substituting this expression back yields that
	\begin{equation}
		\begin{aligned}
			\left(C_{jl}-C_{il}C_{jl_{i}}\right)\m{P}_{ij}\textnormal{diag}(\m{\rho}_{j})=\m{\rho}_{i}\left(\m{T}_{l}\right)_{i,\mathcal{I}_{j}}. \label{sufficient}
		\end{aligned}
	\end{equation}
	Therefore, for parameter values in the set
	\begin{equation}
		\begin{aligned}
			\{
			(\m{f},\m{C},\m{\rho}):
			\textnormal{col}(\m{P}_{ij})\subseteq \textnormal{col}&(\m{\rho}_i),
			\	\forall \ i\neq j
			\}, \label{3.38}
		\end{aligned}
	\end{equation}
	equation \eqref{3.29} admits a solution, since the left-hand side of \eqref{sufficient} always lies in $\textnormal{col}(\m{\rho}_{i})$, and the corresponding $\m{T}_{l}$ can be found. As a concrete example, consider the case of $L=2,K=3,D=2$ with $P_{1}=2,P_{2}=1$. For $\m{f}$ such that $\m{P}_{12}\neq\m{0}$, let 
	\begin{equation}
		\begin{aligned}
			\m{\rho}_{1}=\m{P}_{12},\quad\rho_{2}=1,\label{3.39}
		\end{aligned}
	\end{equation}
	then \eqref{3.38} holds for any pairwise-incoherent $\m{C}$, and hence $\m{\Theta}_{\m{f}\m{f}}=\m{\Theta}^{\textnormal{Cor}}_{\m{f}\m{f}}$.

\section{Extensions}  \label{extensions}
In this section, we first generalize the preceding results to general array geometries while retaining the one-dimensional DOA setting, and then extend them further to multidimensional DOA estimation.
\subsection{General Array Geometries}
For general array geometries, such as sparse linear arrays \cite{moffet1968minimum,pal2010nested,vaidyanathan2010sparse} and general nonuniform linear arrays with arbitrary sensor locations, the steering vector does not admit the form given in \eqref{2.1.2}. In this setting, we require to re-examine whether the preceding conclusions still hold. Since the proof of Theorem \ref{Thm 3} does not rely on the specific form of $\m{a}(f)$, Theorem \ref{Thm 3} remains valid for general array geometries. For Theorem \ref{Thm 4}, assuming that $\m{a}(f)$ is analytic, it suffices to construct an $\m{f}$ such that $\m{P}_{ij}\neq \m{0}$. To this end, we note that 
\begin{equation}
	\begin{aligned}
		\m{P}_{ij}=\left[\left(\m{A}^{H}\m{A}\right)^{-1}\m{A}^{H}\right]_{\mathcal{I}_{i},:}\m{B}_{:,\mathcal{I}_{j}}.\label{5.1.1}
	\end{aligned}
\end{equation}
Letting $\m{A}_{\mathcal{I}_{i}}=\m{A}_{:,\mathcal{I}_{i}}$, $\m{B}_{\mathcal{I}_{i}}=\m{B}_{:,\mathcal{I}_{i}}$, and $\m{A}_{-\mathcal{I}_{i}}$
denote the submatrix of $\m{A}$ obtained after excluding $\m{A}_{\mathcal{I}_i}$, we provide an alternative representation of $\m{P}_{ij}$ in the following lemma.
\begin{lemma}\label{lemma 9}
	The matrix $\m{P}_{ij}$ admits an alternative representation as 
	\begin{equation}
		\begin{aligned}
			\m{P}_{ij}=\left(\m{A}_{\mathcal{I}_{i}}^{H}\m{\Pi}_{\m{A}_{-\mathcal{I}_{i}}}^{\perp}\m{A}_{\mathcal{I}_{i}}\right)^{-1}\m{A}_{\mathcal{I}_{i}}^{H}\m{\Pi}_{\m{A}_{-\mathcal{I}_{i}}}^{\perp}\m{B}_{\mathcal{I}_{j}},\label{5.1.2}
		\end{aligned}
	\end{equation}
	where $\m{\Pi}_{\m{A}_{-\mathcal{I}_{i}}}^{\perp}$ denotes the projection onto the orthogonal complement of the column space of $\m{A}_{-\mathcal{I}_i}$.
\end{lemma}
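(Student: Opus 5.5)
The plan is to read off the $\mathcal{I}_i$ block rows of $(\m{A}^{H}\m{A})^{-1}\m{A}^{H}$ using the standard block-inversion (Schur complement) formula, after permuting the columns of $\m{A}$ so that $\m{A}_{\mathcal{I}_i}$ comes first. Write $\m{A}\m{Q}=[\m{A}_{\mathcal{I}_i},\m{A}_{-\mathcal{I}_i}]$ for a permutation matrix $\m{Q}$. Then
\[
(\m{A}^{H}\m{A})^{-1}\m{A}^{H}=\m{Q}\big((\m{A}\m{Q})^{H}\m{A}\m{Q}\big)^{-1}(\m{A}\m{Q})^{H},
\]
so the rows indexed by $\mathcal{I}_i$ of the left side equal the first $P_i$ rows of $\big((\m{A}\m{Q})^{H}\m{A}\m{Q}\big)^{-1}(\m{A}\m{Q})^{H}$. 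Hence it suffices to treat the case where $\mathcal{I}_i$ is the leading block. By \eqref{5.1.1}, $\m{P}_{ij}$ is exactly this block row multiplied by $\m{B}_{\mathcal{I}_j}$.

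With the Gram matrix partitioned as $\begin{bmatrix}\m{G}_{11}&\m{G}_{12}\\ \m{G}_{21}&\m{G}_{22}\end{bmatrix}$, where $\m{G}_{11}=\m{A}_{\mathcal{I}_i}^{H}\m{A}_{\mathcal{I}_i}$, $\m{G}_{12}=\m{A}_{\mathcal{I}_i}^{H}\m{A}_{-\mathcal{I}_i}$ and $\m{G}_{22}=\m{A}_{-\mathcal{I}_i}^{H}\m{A}_{-\mathcal{I}_i}$, block inversion gives the first block row of the inverse as
\[
\m{S}^{-1}\big[\m{I},\,-\m{G}_{12}\m{G}_{22}^{-1}\big],\qquad \m{S}=\textnormal{Sch}(\m{G},\m{G}_{22})=\m{G}_{11}-\m{G}_{12}\m{G}_{22}^{-1}\m{G}_{21}.
\]
Multiplying by $[\m{A}_{\mathcal{I}_i},\m{A}_{-\mathcal{I}_i}]^{H}$ yields
\[
\m{S}^{-1}\big(\m{A}_{\mathcal{I}_i}^{H}-\m{A}_{\mathcal{I}_i}^{H}\m{A}_{-\mathcal{I}_i}\m{G}_{22}^{-1}\m{A}_{-\mathcal{I}_i}^{H}\big)=\m{S}^{-1}\m{A}_{\mathcal{I}_i}^{H}\m{\Pi}_{\m{A}_{-\mathcal{I}_i}}^{\perp},
\]
using $\m{\Pi}_{\m{A}_{-\mathcal{I}_i}}^{\perp}=\m{I}-\m{A}_{-\mathcal{I}_i}\m{G}_{22}^{-1}\m{A}_{-\mathcal{I}_i}^{H}$. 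By the same identity, $\m{S}=\m{A}_{\mathcal{I}_i}^{H}\m{\Pi}_{\m{A}_{-\mathcal{I}_i}}^{\perp}\m{A}_{\mathcal{I}_i}$. Right-multiplying by $\m{B}_{\mathcal{I}_j}$ then gives \eqref{5.1.2}.

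The only point needing care is invertibility. Here $\m{A}$ has full column rank, since $(\m{A}^{H}\m{A})^{-1}$ is assumed to exist in the definition of $\m{P}$. Consequently $\m{G}_{22}$ is positive definite, being a principal submatrix, and so is the Schur complement $\m{S}$. Equivalently, $\m{\Pi}_{\m{A}_{-\mathcal{I}_i}}^{\perp}\m{A}_{\mathcal{I}_i}$ has full column rank because $\textnormal{col}(\m{A}_{\mathcal{I}_i})\cap\textnormal{col}(\m{A}_{-\mathcal{I}_i})=\{\m{0}\}$. I do not expect any real obstacle; the permutation bookkeeping is the main thing to state cleanly.
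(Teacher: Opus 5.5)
Your proof is correct, and it takes a different route from the paper's.

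You read off the $\mathcal{I}_i$ block row of $(\m{A}^{H}\m{A})^{-1}\m{A}^{H}$ from the block-inversion formula. You then identify the Schur complement with $\m{A}_{\mathcal{I}_i}^{H}\m{\Pi}_{\m{A}_{-\mathcal{I}_i}}^{\perp}\m{A}_{\mathcal{I}_i}$.

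The paper instead argues variationally:
\begin{itemize}
\item it views $\m{P}_{:,\mathcal{I}_j}$ as the minimizer of $\|\m{B}_{\mathcal{I}_j}-\m{A}\m{P}_{:,\mathcal{I}_j}\|_F^2$;
\item it splits the unknowns into the rows indexed by $\mathcal{I}_i$ and $-\mathcal{I}_i$, and profiles out the latter;
\item this gives a reduced least-squares problem with design matrix $\m{\Pi}_{\m{A}_{-\mathcal{I}_i}}^{\perp}\m{A}_{\mathcal{I}_i}$, which the paper proves has full column rank;
\item it solves the reduced normal equations, using that the projector is Hermitian and idempotent to rearrange into \eqref{5.1.2}.
\end{itemize}

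The two are the same computation in disguise: your Schur complement is precisely the normal-equation matrix of the paper's reduced problem. They buy different things, though.

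Yours is a closed-form verification in which invertibility is automatic from positive definiteness of $\m{A}^{H}\m{A}$ and its Schur complements. The price is the permutation bookkeeping, which you handle correctly.

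The paper's version needs no permutation, since partitioning the unknowns is order-free. It also exposes the regression interpretation: $\m{P}_{ij}$ is the coefficient of $\m{B}_{\mathcal{I}_j}$ on $\m{A}_{\mathcal{I}_i}$ after projecting out $\m{A}_{-\mathcal{I}_i}$, which is exactly what Appendix F exploits. In exchange, it must justify the full-column-rank claim separately. It also relies on the standard fact that profiling out a block of variables preserves the joint minimizer.
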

\begin{proof}
	See Appendix \ref{Appendix E}.
\end{proof}
By Lemma \ref{lemma 9}, we show that under mild conditions on $\m{a}(f)$, it is still possible to construct an $\m{f}$ such that $\m{P}_{ij}\neq \m{0}$.
\begin{lemma}\label{lemma 10}
	Suppose that the steering vector $\m{a}(f)$ is analytic and satisfies the following conditions: \textit{(1)} $\m{a}(f)$ spans $\mathbb{C}^N$, and \textit{(2)} there exists an $f$ such that $\m{b}(f)$ is not proportional to $\m{a}(f)$. For $P_{i}>1$ and $i\neq j$, one can construct $\m{f}$ such that $\m{P}_{ij}\neq \m{0}$. 
\end{lemma}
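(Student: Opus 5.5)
Using the representation of Lemma~\ref{lemma 9}, write $\m{Q}=\m{\Pi}_{\m{A}_{-\mathcal{I}_{i}}}^{\perp}$. Since $\m{A}_{\mathcal{I}_{i}}^{H}\m{Q}\m{A}_{\mathcal{I}_{i}}$ is invertible whenever $\m{A}$ has full column rank, we have
\[
\m{P}_{ij}\neq\m{0} \iff \m{A}_{\mathcal{I}_{i}}^{H}\m{Q}\m{B}_{\mathcal{I}_{j}}\neq\m{0}.
\]
It therefore suffices to exhibit one $\m{f}$ with distinct entries such that:
\begin{itemize}
\item $\m{A}(\m{f})$ has full column rank (so that $\m{P}$ is well defined);
\item for some $q\in\mathcal{I}_{j}$, the vector $\m{Q}\m{b}(f_{q})$ is not orthogonal to $\textnormal{col}(\m{Q}\m{A}_{\mathcal{I}_{i}})$.
\end{itemize}
Because $j\neq i$, the column $\m{a}(f_{q})$ lies in $\textnormal{col}(\m{A}_{-\mathcal{I}_{i}})$. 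Hence $\m{Q}\m{a}(f_{q})=\m{0}$, and only the derivative direction $\m{b}(f_{q})$ can survive the projection. I will place the frequency $f_{q}$ at a point $f^{\star}$ given by condition~(2), so that $\m{b}(f^{\star})\notin\textnormal{span}\{\m{a}(f^{\star})\}$.

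The construction proceeds in three steps, assuming $K\le N$ (needed for full column rank).

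\emph{Step 1 (full column rank).} Condition~(1) together with analyticity implies that for generic distinct $f_1,\dots,f_K$ the vectors $\m{a}(f_k)$ are linearly independent. Indeed, the minors of $\m{A}$ are analytic functions that are not identically zero: if all $K\times K$ minors vanished identically, the span of $\{\m{a}(f)\}$ would have dimension less than $K\le N$, contradicting condition~(1).

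\emph{Step 2 (choice of the $i$-th group).} Fix $f_{q}=f^{\star}$, and let $\m{u}=\m{Q}\m{b}(f^{\star})$. The goal is to choose the frequencies in $\mathcal{I}_{i}$ so that some column $\m{a}(f_{p})$, $p\in\mathcal{I}_{i}$, satisfies $\m{a}(f_{p})^{H}\m{u}\neq 0$. Note that $\m{u}\neq\m{0}$ whenever $\m{b}(f^{\star})\notin\textnormal{col}(\m{A}_{-\mathcal{I}_{i}})$. Once $\m{u}\neq\m{0}$, condition~(1) guarantees that the analytic function $f\mapsto\m{a}(f)^{H}\m{u}$ is not identically zero, because $\{\m{a}(f)\}$ spans $\mathbb{C}^{N}$. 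Hence generic $f_{p}$ works. Moreover, $\m{a}(f_p)^{H}\m{Q}\m{b}(f^{\star})=\m{a}(f_p)^{H}\m{u}$, since $\m{Q}$ is Hermitian and idempotent.

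\emph{Step 3 (genericity).} Every quantity involved is analytic in $\m{f}$ on the open set where $\m{A}$ has full column rank. It is therefore enough to produce a single valid point; genericity then follows as in the proof of Theorem~\ref{Thm 4}.

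The main obstacle is ensuring $\m{u}\neq\m{0}$, i.e., that $\m{b}(f^{\star})\notin\textnormal{col}(\m{A}_{-\mathcal{I}_{i}})$, since the other columns of $\m{A}_{-\mathcal{I}_{i}}$ could in principle conspire to span $\m{b}(f^{\star})$. My plan is to exploit the fact that $\m{A}_{-\mathcal{I}_{i}}$ has only $K-P_{i}\le K-2<N$ columns, which is where the hypothesis $P_i>1$ enters.
\begin{itemize}
\item Because $\m{b}(f^{\star})$ and $\m{a}(f^{\star})$ are linearly independent, I choose the remaining frequencies in $-\mathcal{I}_{i}$, other than $f^{\star}$, one at a time, avoiding an analytic null set at each stage.
\item Concretely, I require that $[\m{A}_{-\mathcal{I}_{i}},\m{b}(f^{\star})]$ has full column rank $K-P_{i}+1\le N$. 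This is possible by the same minor argument as in Step~1: by condition~(1), each new column $\m{a}(f)$ can be chosen outside any given proper subspace containing the previous columns together with $\m{b}(f^{\star})$.
\item The starting pair $\{\m{a}(f^{\star}),\m{b}(f^{\star})\}$ is independent by condition~(2), so the induction can begin.
\end{itemize}
The $P_{i}$ frequencies of group $i$ are then chosen generically, as in Step~2, so that the full $\m{A}$ retains full column rank. Since each requirement excludes only a null set, all conditions can be met simultaneously.
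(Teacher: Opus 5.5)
Your proposal is correct and follows essentially the paper's route. You reduce via Lemma~\ref{lemma 9} to showing $\m{A}_{\mathcal{I}_i}^H\m{\Pi}_{\m{A}_{-\mathcal{I}_i}}^{\perp}\m{b}_k\neq\m{0}$ for some $k\in\mathcal{I}_j$ placed where $\m{b}$ is not proportional to $\m{a}$, and you build $\m{A}_{-\mathcal{I}_i}$ column by column so that $\m{b}_k\notin\textnormal{col}(\m{A}_{-\mathcal{I}_i})$.

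The only real difference is the final step. You use condition (1) to find $f_p$ with $\m{a}(f_p)^H\m{\Pi}_{\m{A}_{-\mathcal{I}_i}}^{\perp}\m{b}_k\neq 0$, whereas the paper perturbs $f_o=f_k+\delta$ and uses a first-order Taylor expansion. You also track the full column rank of $\m{A}$ more carefully than the paper does. One minor point: $P_i>1$ is not actually needed in your dimension count, since $K-P_i+1\le N$ already follows from $K\le N$ and $P_i\ge 1$.
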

\begin{proof}
	See Appendix \ref{Appendix F}.
\end{proof}
The conditions required by Lemma \ref{lemma 10} are satisfied by most commonly used array geometries \cite{krim2002two}. Hence, Theorem \ref{Thm 4} remains valid for a broad class of array geometries beyond the specific array model considered earlier.
\subsection{Multidimensional DOA Estimation}
In this section, we consider multidimensional DOA estimation, which arises in a wide range of array geometries, including uniform planar arrays (UPAs) \cite{zoltowski1996closed} and uniform circular arrays (UCAs) \cite{mathews1994eigenstructure}. For clarity, we focus on the two-dimensional case, while the analysis can be extended to higher-dimensional models arising in related problems, such as multidimensional line spectral estimation \cite{yang2016vandermonde}. Specifically, the two-dimensional DOA model is formulated as follows. 
\begin{equation}
	\begin{aligned}
		\m{Y}=\m{A}(\m{F})\m{S}+\m{E}, \label{5.2.1}
	\end{aligned}
\end{equation}
where $\m{F}=[\m{f}^{1},\m{f}^{2}]\in\mathbb{R}^{K\times 2}$, and
\begin{equation}
	\begin{aligned}
		\m{A}(\m{F})=\left[\m{a}(f_{1}^{1},f_{1}^{2}),\m{a}(f_{2}^{1},f_{2}^{2}),...,\m{a}(f_{K}^{1},f_{K}^{2})\right] \label{5.2.2}
	\end{aligned}
\end{equation} 
with $\m{a}(f^{1},f^{2})$ denoting the steering vector associated with the two-dimensional frequency pair $(f^{1},f^{2})$. In this case, the parameter vector is modified as $\left[\m{\theta}_{\m{F}}^{T},\m{\theta}_{\m{S}}^{T}\right]^{T}\in\mathbb{R}^{2K+2KL}$, where $\m{\theta}_{\m{F}}=\textnormal{vec}(\m{F})$ collects the multidimensional frequency parameters. Accordingly, the FIM calculation differs from that in the one-dimensional case. Let
\begin{equation}
	\begin{aligned}
		\m{b}^{1}(\m{f})=\frac{\partial \m{a}(\m{f})}{\partial f^{1}},\quad \m{b}^{2}(\m{f})=\frac{\partial \m{a}(\m{f})}{\partial f^{2}}. \label{5.2.3}
	\end{aligned}
\end{equation}
The partial derivatives required to calculate FIM is given by
\begin{equation}
	\begin{aligned}
		\frac{\partial \m{X}_{:,i}}{\partial \m{f}^{1}}=\m{B}^{1}(\m{F})\m{D}_{i},\quad 
		\frac{\partial \m{X}_{:,i}}{\partial \m{f}^{2}}=\m{B}^{2}(\m{F})\m{D}_{i}, \label{5.2.4}
	\end{aligned}
\end{equation}
where $\m{B}^{1}(\m{F})=\left[\m{b}^{1}(\m{f}_{1}),\m{b}^{1}(\m{f}_{2}),...,\m{b}^{1}(\m{f}_{K})\right]$ and $\m{B}^{2}(\m{F})$ is defined similarly. Let $\m{B}=\left[\m{B}^{1},\m{B}^{2}\right]$. Due to the change from $\m{f}$ to $\m{F}$ in the parameterization, the matrices involved in the above proofs in Section \ref{proof of 3} and Section \ref{proof of 4} remain essentially the same, except that $\m{I}_{\m{\theta}_{\m{S}}\m{f}}$ is replaced by $\m{I}_{\m{\theta}_{\m{S}}\m{\theta}_{\m{F}}}=\mathcal{V}(\m{L})$ with
\begin{equation}
	\begin{aligned}
		\m{L} = \begin{bmatrix}
			\m{A}^{H}\m{B}\left(\m{I}_{2}\otimes \m{D}_{1}\right)\\
			\m{A}^{H}\m{B}\left(\m{I}_{2}\otimes \m{D}_{2}\right)\\
			\vdots\\
			\m{A}^{H}\m{B}\left(\m{I}_{2}\otimes \m{D}_{L}\right)
		\end{bmatrix}. \label{5.2.5}
	\end{aligned}
\end{equation}
By the linearity of the Kronecker product, we have that
\begin{equation}
	\begin{aligned}
		\m{T}_{r}=\m{P}\left(\m{I}_{2}\otimes\m{D}_{r}\right),\quad r=1,2,...,R, \quad
		\m{T}_{R+1}=\m{0} \label{5.2.6}
	\end{aligned}
\end{equation}
is the solution to \eqref{3.20}, where 
\begin{equation}
	\begin{aligned}
		\m{P}&=\left(\m{A}^{H}\m{A}\right)^{-1}\m{A}^{H}\m{B}
		\\&=\left[\left(\m{A}^{H}\m{A}\right)^{-1}\m{A}^{H}\m{B}^{1},\left(\m{A}^{H}\m{A}\right)^{-1}\m{A}^{H}\m{B}^{2}\right]\in\mathbb{C}^{K\times 2K}.\label{5.2.7}
	\end{aligned}
\end{equation}
Thus we conclude that Theorem \ref{Thm 3} can be extended to the multidimensional case. \par 
To extend Theorem \ref{Thm 4} to the two-dimensional case, we partition $\m{P}$ as $\m{P}=[\m{P}^{1},\m{P}^{2}]$ with $\m{P}^{i}\in\mathbb{C}^{K\times K}$. Following similar steps in Theorem \ref{Thm 4}, we obtain an expression analogous to \eqref{3.29}:
\begin{equation}
	\begin{aligned}
		\m{P}\left(\m{I}_{2}\otimes\m{D}_{l}\right)=\m{\Xi}_{l}\m{T}_{\m{\rho}}+\m{\Psi}\m{T}_{l},\quad l=1,2,...,L, \label{5.2.8}
	\end{aligned}
\end{equation}
where $\m{T}_{\m{\rho}}\in\mathbb{C}^{K\times 2K},\m{T}_{l}\in\mathbb{C}^{|\mathcal{C}_{l}|\times 2K}$ require to be constructed. Partition $\m{T}_{\m{\rho}}$ and $\m{T}_{l}$ as $\m{T}_{\m{\rho}}=\left[\m{T}_{\m{\rho}}^{1},\m{T}_{\m{\rho}}^{2}\right]$ and $\m{T}_{l}=\left[\m{T}_{l}^{1},\m{T}_{l}^{2}\right]$, respectively. Equation \eqref{5.2.8} is equivalent to the following two subequations as
\begin{subequations}
    \begin{align}
    	\m{P}^{1}\m{D}_{l}&=\m{\Xi}_{l}\m{T}_{\m{\rho}}^{1}+\m{\Psi}\m{T}_{l}^{1},\\
    	\m{P}^{2}\m{D}_{l}&=\m{\Xi}_{l}\m{T}_{\m{\rho}}^{2}+\m{\Psi}\m{T}_{l}^{2}.
    \end{align}
\end{subequations}
Each equation has exactly the same form as its one-dimensional counterpart. Hence, the argument used in the proof of Theorem \ref{Thm 4} applies verbatim to each equation after replacing $\m{P}$ with $\m{P}^{i},i=1,2$.
Partition $\m{P}^{1}$ and $\m{P}^{2}$ in the same manner as in \eqref{3.30}. Based on a similar argument, it suffices to construct $\m{F}$ such that $\m{P}^{1}_{ij}\neq\m{0}$ or $\m{P}^{2}_{ij}\neq\m{0}$ for some $P_{i}>1$ and $i\neq j$, which holds if $\m{a}(\m{f})$ satisfies conditions analogous to those in Lemma \ref{lemma 10}. A detailed discussion is omitted here for brevity and is provided in the supplementary material. We summarize these conclusions in the following theorems.
\begin{theorem}[Multidimensional Rank Information]\label{Thm 11}
	Under the low-rank model, we have 
	\begin{equation}
		\begin{aligned}
			\m{\Theta}_{\m{\theta}_{\m{F}}\m{\theta}_{\m{F}}}=\m{\Theta}_{\m{\theta}_{\m{F}}\m{\theta}_{\m{F}}}^{\textnormal{LR}}. \label{5.2.9}
		\end{aligned}
	\end{equation}
	When $R<\min\{K,L\}$, we have $\textnormal{tr}\left(\m{\Theta}_{\m{\theta}_{\m{S}}\m{\theta}_{\m{S}}}- \m{\Theta}_{\m{\theta}_{\m{S}}\m{\theta}_{\m{S}}}^{\textnormal{LR}}\right)>0$.
\end{theorem}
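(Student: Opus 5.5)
The plan is to reduce Theorem \ref{Thm 11} to Proposition \ref{prop2}, with $\m{\theta}_{\m{F}}\in\mathbb{R}^{2K}$ in the role of $\m{f}$ and $\m{\theta}_{\m{S}}$ in the role of $\m{\alpha}$. The low-rank constraint \eqref{2.2.4} and its Jacobian $\m{J}_{1}=\mathcal{R}(\m{K}_{1})$ do not involve the steering vectors. They are therefore unchanged from the one-dimensional case, and $\m{J}_{1}$ still has full column rank. Likewise, $\m{I}_{\m{\theta}_{\m{S}}\m{\theta}_{\m{S}}}=\mathcal{R}(\m{H})$ with $\m{H}=\m{I}_{L}\otimes \m{A}^{H}\m{A}$, and the only change is $\m{I}_{\m{\theta}_{\m{S}}\m{\theta}_{\m{F}}}=\mathcal{V}(\m{L})$ with $\m{L}$ given in \eqref{5.2.5}. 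Positive definiteness of the full FIM is assumed, as in Proposition \ref{prop2}. In particular $\m{A}^{H}\m{A}$ is invertible, so $\m{P}$ in \eqref{5.2.7} is well defined.

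For the frequency part, I will verify condition \eqref{3.8}. Lemmas \ref{Lem5} and \ref{Lem6} give $\m{I}_{\m{\theta}_{\m{S}}\m{\theta}_{\m{S}}}^{-1}\m{I}_{\m{\theta}_{\m{S}}\m{\theta}_{\m{F}}}=\mathcal{V}(\m{H}^{-1}\m{L})$. By Lemma \ref{Lem7}, it then suffices to find a complex $\m{T}$ with $\m{H}^{-1}\m{L}=\m{K}_{1}\m{T}$. The $l$-th block of $\m{H}^{-1}\m{L}$ is $\m{P}(\m{I}_{2}\otimes\m{D}_{l})$. I take $\m{T}$ as in \eqref{5.2.6}, i.e. $\m{T}_{r}=\m{P}(\m{I}_{2}\otimes\m{D}_{r})$ for $r\le R$ and $\m{T}_{R+1}=\m{0}$. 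The first $R$ block rows then match trivially. For a column $r>R$, the corresponding block row of $\m{K}_{1}\m{T}$ is $\sum_{i=1}^{R}\tilde M_{i,r-R}\m{P}(\m{I}_{2}\otimes\m{D}_{i})$. Since $\m{D}\mapsto \m{I}_{2}\otimes\m{D}$ is linear and $\sum_i \tilde M_{i,r-R}\m{S}_{:,i}=\m{S}_{:,r}$ by \eqref{2.2.2}, this equals $\m{P}(\m{I}_{2}\otimes\m{D}_{r})$, exactly as in \eqref{3.24}. Proposition \ref{prop2} then yields equality in \eqref{3.7}, which is \eqref{5.2.9}.

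For the source-matrix part, when $R<\min\{K,L\}$ we have $\dim\mathcal{M}_{1}=2KR+2RL-2R^{2}<2KL$, so $m<n$. The strict trace inequality \eqref{3.6} of Proposition \ref{prop2} then applies directly. I should make sure this trace claim in Proposition \ref{prop2} holds unconditionally. It does: $\m{\Theta}_{\m{\alpha\alpha}}-\m{\Theta}_{\m{\alpha\alpha}}^{\textnormal{cons}}$ is a compression of $\m{\Theta}\m{U}^{T}(\m{U}\m{\Theta}\m{U}^{T})^{-1}\m{U}\m{\Theta}$ with $\m{U}$ supported on the $\m{\alpha}$ coordinates. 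The difference is therefore nonzero PSD, and the claim is taken from Proposition \ref{prop2} as stated.

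The main obstacle is the bookkeeping with $\m{P}\in\mathbb{C}^{K\times 2K}$ and the sizes of $\m{T}$. The point to check is that the complex-to-real lifting of Lemma \ref{Lem7} still applies when $\m{L}$ has $2K$ columns. It does, since that lemma places no restriction on column counts. The rest is the same as in the proof of Theorem \ref{Thm 3}.
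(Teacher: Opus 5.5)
Your proposal is correct and is essentially the paper's own argument. You reduce to Proposition~\ref{prop2} with $\m{\theta}_{\m{F}}$ in place of $\m{f}$, lift the column-space condition via Lemmas~\ref{Lem5}--\ref{Lem7}, and verify $\m{H}^{-1}\m{L}=\m{K}_{1}\m{T}$ with the same choice \eqref{5.2.6}, using linearity of $\m{D}\mapsto\m{I}_{2}\otimes\m{D}$ as in \eqref{3.24}. The source-matrix claim follows from \eqref{3.6} with $2KL-(2KR+2RL-2R^{2})=2(K-R)(L-R)>0$; in the excluded case $R=\min\{K,L\}$, where $m=n$ and Proposition~\ref{prop2} does not formally apply, the two models coincide, so the frequency equality holds trivially.
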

\begin{theorem}[Multidimensional Coherence Structure]\label{Thm 12}
	Suppose $D\geq2$ and there exists $d$ such that $P_{d}>1$. When the steering vector $\m{a}(\m{f})$ is analytic and satisfies $\m{a}(\m{f})$ spans $\mathbb{C}^N$, and  there exists an $\m{f}$ such that $\m{b}^{1}(\m{f})$, or $\m{b}^{2}(\m{f})$, is not proportional to $\m{a}(\m{f})$, for any pairwise-incoherent $\m{C}$ and generic $\left(\m{f},\m{\rho}\right)$, we have
		\begin{equation}
		\begin{aligned}
			\m{\Theta}_{\m{\theta}_{\m{F}}\m{\theta}_{\m{F}}}\succeq \m{\Theta}_{\m{\theta}_{\m{F}}\m{\theta}_{\m{F}}}^{\textnormal{Cor}},\quad \textnormal{tr}\left(\m{\Theta}_{\m{\theta}_{\m{F}}\m{\theta}_{\m{F}}}- \m{\Theta}_{\m{\theta}_{\m{F}}\m{\theta}_{\m{F}}}^{\textnormal{Cor}}\right)>0. \label{5.2.10}
		\end{aligned}
	\end{equation} 
\end{theorem}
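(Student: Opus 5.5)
The plan is to follow the one-dimensional route of Theorem \ref{Thm 4} and reduce everything to a genericity statement about $\m{P}^{1}_{ij}$ or $\m{P}^{2}_{ij}$. The constraint \eqref{2.2.7} is the same as before. Proposition \ref{prop2} therefore gives $\m{\Theta}_{\m{\theta}_{\m{F}}\m{\theta}_{\m{F}}}\succeq\m{\Theta}^{\textnormal{Cor}}_{\m{\theta}_{\m{F}}\m{\theta}_{\m{F}}}$ for free, with equality if and only if \eqref{3.8} holds, i.e.\ $\textnormal{col}(\m{I}_{\m{\theta}_{\m{S}}\m{\theta}_{\m{S}}}^{-1}\m{I}_{\m{\theta}_{\m{S}}\m{\theta}_{\m{F}}})\subseteq\textnormal{col}(\m{J}_{2})$. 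Since $\m{\Theta}_{\m{\theta}_{\m{F}}\m{\theta}_{\m{F}}}-\m{\Theta}^{\textnormal{Cor}}_{\m{\theta}_{\m{F}}\m{\theta}_{\m{F}}}$ is PSD, a nonzero difference has positive trace, so it suffices to show that \eqref{3.8} fails for generic $(\m{F},\m{\rho})$.

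The first step translates \eqref{3.8} into complex form. Using $\m{I}_{\m{\theta}_{\m{S}}\m{\theta}_{\m{S}}}=\mathcal{R}(\m{H})$, $\m{I}_{\m{\theta}_{\m{S}}\m{\theta}_{\m{F}}}=\mathcal{V}(\m{L})$ with $\m{L}$ from \eqref{5.2.5}, and $\m{J}_{2}=\mathcal{R}(\m{K}_{2})$, Lemmas \ref{Lem5}--\ref{Lem7} show that \eqref{3.8} is equivalent to the existence of $\m{T}$ with $\m{H}^{-1}\m{L}=\m{K}_{2}\m{T}$. Blockwise this is \eqref{5.2.8}, whose two column halves are the one-dimensional systems with $\m{P}^{1}$ and $\m{P}^{2}$. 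Next I repeat \eqref{3.31}--\eqref{3.35} verbatim for either half. I fix $i$ with $P_i>1$ and some $j\neq i$, project with $\m{I}-\m{\rho}_i\m{\rho}_i^{\dagger}$, and use the pairwise incoherence of rows $i$ and $j$ of $\m{C}$ to obtain the $2\times2$ nonsingular system. This gives the necessary conditions $(\m{I}-\m{\rho}_i\m{\rho}_i^{\dagger})\m{P}^{k}_{ij}=\m{0}$ for $k=1,2$. Equality of the CRBs thus forces both to vanish, and it suffices to violate one of them.

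For genericity, the entries of $\m{A}^{H}\m{A}$ and $\m{A}^{H}\m{B}^{k}$ are real-analytic in $\m{F}$ because $\m{a}$ is analytic. Off the analytic set where $\m{A}^{H}\m{A}$ is singular, the entries of $\m{P}^{k}$ are real-analytic too, and \eqref{3.37} is real-analytic in $\m{\rho}\neq\m{0}$. Hence $(\m{I}-\m{\rho}_i\m{\rho}_i^{\dagger})\m{P}^{k}_{ij}$ has real-analytic entries, and it is enough to exhibit a single point where it is nonzero. Given $\m{F}$ with $\m{P}^{k}_{ij}\neq\m{0}$, I choose $\m{\rho}_i$ outside the span of a nonzero column, which works since $P_i\geq2$. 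The zero set is then a proper analytic subset and has measure zero.

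The main obstacle is constructing $\m{F}$ with $\m{P}^{1}_{ij}\neq\m{0}$ or $\m{P}^{2}_{ij}\neq\m{0}$, the analogue of Lemma \ref{lemma 10}. I would use Lemma \ref{lemma 9}, $\m{P}^{k}_{ij}=(\m{A}_{\mathcal{I}_i}^{H}\m{\Pi}^{\perp}\m{A}_{\mathcal{I}_i})^{-1}\m{A}_{\mathcal{I}_i}^{H}\m{\Pi}^{\perp}\m{B}^{k}_{\mathcal{I}_j}$, where $\m{\Pi}^{\perp}$ projects onto the orthogonal complement of $\textnormal{col}(\m{A}_{-\mathcal{I}_i})$. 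Its proof is pure linear algebra, so it is independent of dimension. Pick $\m{f}_0$ with, say, $\m{b}^{1}(\m{f}_0)\notin\textnormal{span}(\m{a}(\m{f}_0))$, and place one frequency of group $j$ at $\m{f}_0$. Because $\{\m{a}(\m{f})\}$ spans $\mathbb{C}^N$, the remaining frequencies can be chosen so that $\m{A}$ has full column rank (assuming $K\le N$, as needed for the CRB to exist). I would then choose the other columns of $\m{A}_{-\mathcal{I}_i}$, including $\m{a}(\m{f}_0)$, so that $\m{\Pi}^{\perp}\m{b}^{1}(\m{f}_0)\neq\m{0}$. This is possible because $\m{b}^{1}(\m{f}_0)$ lies outside the span of $\m{a}(\m{f}_0)$ and one can take $\textnormal{col}(\m{A}_{-\mathcal{I}_i})$ generic. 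Finally I choose $\m{A}_{\mathcal{I}_i}$ so that $\m{A}_{\mathcal{I}_i}^{H}\m{\Pi}^{\perp}\m{b}^{1}(\m{f}_0)\neq\m{0}$; the spanning property makes such columns available. The delicate point is arranging these choices compatibly when $K$ is close to $N$. If that fails, the fallback is a perturbation argument: move one frequency of group $i$ toward $\m{f}_0$ and use an analytic expansion showing that $\m{P}^{1}_{ij}$ acquires a nonzero leading term.
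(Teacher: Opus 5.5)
Your proposal is correct and follows essentially the same route as the paper. Proposition~\ref{prop2} with Lemmas~\ref{Lem5}--\ref{Lem7} reduces equality to solvability of \eqref{5.2.8}, which splits into two copies of \eqref{3.29} with $\m{P}^{1}$ and $\m{P}^{2}$; the projection argument \eqref{3.31}--\eqref{3.35} plus analyticity then reduces everything to exhibiting one $\m{F}$ with $\m{P}^{1}_{ij}\neq\m{0}$, which you build through Lemma~\ref{lemma 9} along the lines of Lemma~\ref{lemma 10}. The only deviation is the last step, where you take a group-$i$ steering vector with $\m{a}(\m{f})^{H}\m{v}\neq 0$, $\m{v}=\m{\Pi}^{\perp}_{\m{A}_{-\mathcal{I}_{i}}}\m{b}^{1}(\m{f}_{0})\neq\m{0}$, directly from the spanning hypothesis rather than via the paper's Taylor perturbation (your fallback), and your compatibility worry for $K$ near $N$ is unfounded: the greedy choice of $\m{A}_{-\mathcal{I}_{i}}$ only needs $K-P_{i}+1\leq N$, and any such $\m{a}(\m{f})$ automatically lies outside $\textnormal{col}(\m{A}_{-\mathcal{I}_{i}})$, so the remaining columns can be added greedily while keeping $\m{A}$ of full column rank.
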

Theorem \ref{Thm 12} shows that exploiting the coherence structure can strictly reduce the CRB for the frequency matrix $\m{F}$. Nevertheless, in some applications, it is also important to examine the CRBs for the individual components of $\m{F}$, such as $\m{f}^1$ and $\m{f}^2$. For instance, in two-dimensional DOA estimation, these two components may correspond to the azimuth and elevation angles, whose estimation accuracies can be of separate interest. Due to the coupling between $\m{f}^1$ and $\m{f}^2$, Proposition \ref{prop2} is insufficient to resolve this issue. Therefore, we establish the following proposition.
\begin{proposition}\label{prop 13}
	Consider $\m{f}^{1},\m{f}^{2}\in\mathbb{R}^{K},\m{\alpha}\in\mathbb{R}^{n}$ as the parameters to be estimated, with its FIM given by
	\begin{equation}
		\begin{aligned}
			\begin{bmatrix}
				\m{I}_{\m{f}^{1}\m{f}^{1}}&\m{I}_{\m{f}^{1}\m{f}^{2}}&\m{I}_{\m{f}^{1}\m{\alpha}} \\
					\m{I}_{\m{f}^{2}\m{f}^{1}}&\m{I}_{\m{f}^{2}\m{f}^{2}}&\m{I}_{\m{f}^{2}\m{\alpha}}\\
						\m{I}_{\m{\alpha}\m{f}^{1}}&\m{I}_{\m{\alpha}\m{f}^{2}}&\m{I}_{\m{\alpha}\m{\alpha}}
			\end{bmatrix} \label{5.2.11}
		\end{aligned}
	\end{equation}
	being positive definite. Let $\m{\Theta}$ denote the CRB matrix, and partition it in the same manner as in \eqref{5.2.11}. Let $\m{\Theta}^{\textnormal{cons}}$ denote the constrained CRB matrix under the constraint shown in Proposition \ref{prop2}, we have
		$	\m{\Theta}_{\m{f}^{1}\m{f}^{1}}\succeq \m{\Theta}_{\m{f}^{1}\m{f}^{1}}^{\textnormal{cons}}, $
	with equality if and only if 
	\begin{equation}
		\begin{aligned}
			\textnormal{col}\Bigg(\Big(\m{I}_{\m{\alpha}\m{f}^{2}}\m{I}_{\m{f}^{2}\m{f}^{2}}^{-1}&\m{I}_{\m{f}^{2}\m{\alpha}}-\m{I}_{\m{\alpha}\m{\alpha}}\Big)^{-1} \\&
			\left(\m{I}_{\m{\alpha}\m{f}^{2}}\m{I}_{\m{f}^{2}\m{f}^{2}}^{-1}\m{I}_{\m{f}^{2}\m{f}^{1}}-\m{I}_{\m{\alpha}\m{f}^{1}}\right)\Bigg)\subseteq \textnormal{col}(\m{J}). \label{5.2.13}
		\end{aligned}
	\end{equation}
	The conclusion for $\m{\Theta}_{\m{f}^{2}\m{f}^{2}}$ follows analogously by interchanging $\m{f}^{1}$ and $\m{f}^{2}$.
\end{proposition}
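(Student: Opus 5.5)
The plan is to reduce Proposition \ref{prop 13} to Proposition \ref{prop2} by treating $\m{f}^{2}$ as an additional nuisance parameter that is left unconstrained. Set $\m{\beta}=[(\m{f}^{2})^{T},\m{\alpha}^{T}]^{T}$ and impose the constraint $\m{\beta}=\tilde{\m{g}}(\m{f}^{2},\m{\varphi})=[(\m{f}^{2})^{T},\m{g}(\m{\varphi})^{T}]^{T}$. The Jacobian of this map is
$$\tilde{\m{J}}=\textnormal{blkdiag}(\m{I}_{K},\m{J}),$$
which has full column rank, and the constrained set for $\m{\beta}$ is exactly the set induced by the original constraint on $\m{\alpha}$. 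The constrained CRB matrix $\m{\Theta}^{\textnormal{cons}}$ is therefore the same object in both formulations, and its $(\m{f}^{1},\m{f}^{1})$ block is what we want. Applying Proposition \ref{prop2} with $\m{f}^{1}$ as the parameter of interest and $\m{\beta}$ as the nuisance immediately gives $\m{\Theta}_{\m{f}^{1}\m{f}^{1}}\succeq\m{\Theta}^{\textnormal{cons}}_{\m{f}^{1}\m{f}^{1}}$. Equality holds if and only if
$$\textnormal{col}(\m{I}_{\m{\beta\beta}}^{-1}\m{I}_{\m{\beta}\m{f}^{1}})\subseteq\textnormal{col}(\tilde{\m{J}}).$$

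The second step is to make this condition explicit.

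\emph{Block inverse.} Write
$$\m{I}_{\m{\beta\beta}}=\begin{bmatrix}\m{I}_{\m{f}^{2}\m{f}^{2}}&\m{I}_{\m{f}^{2}\m{\alpha}}\\ \m{I}_{\m{\alpha}\m{f}^{2}}&\m{I}_{\m{\alpha\alpha}}\end{bmatrix},$$
which is positive definite as a principal submatrix of \eqref{5.2.11}. Let $\m{v}=\m{I}_{\m{\beta\beta}}^{-1}\m{I}_{\m{\beta}\m{f}^{1}}=[\m{v}_{1}^{T},\m{v}_{2}^{T}]^{T}$. Invert it blockwise by eliminating $\m{f}^{2}$ first, i.e. via the Schur complement $\m{S}_{\alpha}=\m{I}_{\m{\alpha\alpha}}-\m{I}_{\m{\alpha}\m{f}^{2}}\m{I}_{\m{f}^{2}\m{f}^{2}}^{-1}\m{I}_{\m{f}^{2}\m{\alpha}}$. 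This Schur complement is positive definite, so $\m{I}_{\m{\alpha}\m{f}^{2}}\m{I}_{\m{f}^{2}\m{f}^{2}}^{-1}\m{I}_{\m{f}^{2}\m{\alpha}}-\m{I}_{\m{\alpha\alpha}}=-\m{S}_{\alpha}$ is invertible. The $\m{\alpha}$-block of the solution is
$$\m{v}_{2}=\m{S}_{\alpha}^{-1}\left(\m{I}_{\m{\alpha}\m{f}^{1}}-\m{I}_{\m{\alpha}\m{f}^{2}}\m{I}_{\m{f}^{2}\m{f}^{2}}^{-1}\m{I}_{\m{f}^{2}\m{f}^{1}}\right),$$
which is exactly the matrix in \eqref{5.2.13}, because the two sign flips cancel.

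\emph{Column-space condition.} Because $\tilde{\m{J}}$ is block diagonal with first block $\m{I}_{K}$, the first block row of $\tilde{\m{J}}$ spans everything. Hence $\textnormal{col}([\m{v}_{1};\m{v}_{2}])\subseteq\textnormal{col}(\tilde{\m{J}})$ holds if and only if $\textnormal{col}(\m{v}_{2})\subseteq\textnormal{col}(\m{J})$. To see this, write each column as $[\m{x};\m{J}\m{y}]$, and take $\m{x}$ equal to the corresponding column of $\m{v}_{1}$, which is always possible. This yields \eqref{5.2.13}.

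The symmetric statement for $\m{f}^{2}$ follows by swapping the roles of $\m{f}^{1}$ and $\m{f}^{2}$.

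The main point needing care is the first step: checking that Proposition \ref{prop2} genuinely applies to the augmented nuisance $\m{\beta}$. This requires $\tilde{\m{g}}$ to have dimension $K+m<K+n$ and full-rank Jacobian, which holds since $m<n$. It also requires that the constrained CRB for $(\m{f}^{1},\m{\beta})$ coincides with the constrained CRB defined in the statement for $(\m{f}^{1},\m{f}^{2},\m{\alpha})$. Both hold because Theorem \ref{Thm 1}'s formula depends only on the tangent space $\textnormal{col}(\tilde{\m{J}})$, and that tangent space is the same in either description. The remaining block-inversion algebra is routine.
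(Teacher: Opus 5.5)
Your proof is correct, but it orders the steps differently from the paper. The paper eliminates $\m{f}^{2}$ first. It passes to the effective FIM on $(\m{f}^{1},\m{\alpha})$, namely the Schur complement of $\m{I}_{\m{f}^{2}\m{f}^{2}}$ in \eqref{5.2.11}, and notes that the two matrices in \eqref{5.2.13} are exactly the negated $(\m{\alpha},\m{\alpha})$ and $(\m{\alpha},\m{f}^{1})$ blocks of that Schur complement. It then applies Proposition \ref{prop2} to this reduced FIM with the original Jacobian $\m{J}$, so the equality condition is literally \eqref{3.8} for the reduced problem.

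You instead keep the full FIM and absorb $\m{f}^{2}$ into the nuisance with Jacobian $\textnormal{blkdiag}(\m{I}_{K},\m{J})$, which is the same augmentation trick the paper uses inside its proof of Proposition \ref{prop2}. The Schur-complement elimination then happens afterwards, when you block-invert $\m{I}_{\m{\beta\beta}}$, followed by your column-space argument that discards the unconstrained $\m{f}^{2}$-component.

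Your ordering makes the identification of the constrained CRB automatic. In both descriptions the full constraint Jacobian is $\textnormal{blkdiag}(\m{I}_{K},\m{I}_{K},\m{J})$, so Proposition \ref{prop2} applies verbatim and nothing further needs checking. The paper's ordering must justify that eliminating the unconstrained $\m{f}^{2}$ commutes with imposing the constraint. This does hold: with $\tilde{\m{U}}=[\m{0},\m{0},\m{U}]$ having no $\m{f}^{2}$ columns, the $(\m{f}^{1},\m{\alpha})$ blocks of $\m{\Theta}^{\textnormal{cons}}$ depend only on the $(\m{f}^{1},\m{\alpha})$ principal block of $\m{\Theta}$, which is the inverse of the reduced FIM. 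In exchange, the paper's route explains why \eqref{5.2.13} has the form it does and needs no separate block inversion or column-space reduction.
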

The proof of Proposition \ref{prop 13} relies on the observation that the two matrices in \eqref{5.2.13} can be viewed as the equivalent FIM blocks associated with $\m{\alpha}$ and $\m{f}^{1}$ after eliminating $\m{f}^{2}$. Hence, Proposition \ref{prop 13} can be obtained by applying Proposition \ref{prop2} to the resulting FIM blocks. A detailed proof is provided in the supplementary material.\par
Applying Proposition \ref{prop 13}, we show that for UPAs, the steering vector of which is
\begin{equation}
	\begin{aligned}
		\m{a}(\m{f})=\m{a}_{1}(f^{1})\otimes \m{a}_{2}(f^{2}), \label{5.2.14}
	\end{aligned}
\end{equation}
with $\m{a}_{1}(f^{1})=\left[1,e^{i2\pi f^{1}},...,e^{i2\pi (N_{1}-1)f^{1}}\right]^{T}$, $\m{a}_{2}(f^{2})=\left[1,e^{i2\pi f^{2}},...,e^{i2\pi (N_{2}-1)f^{2}}\right]^{T}$ and $N=N_{1}N_{2}$, the CRBs for $\m{f}^{1}$ and $\m{f}^{2}$ can be strictly reduced by exploiting the coherence structure separately.
\begin{corollary}\label{Cor 14}
	Under the same assumptions as in Theorem \ref{Thm 4}, for UPAs, any pairwise-incoherent $\m{C}$, and generic $\left(\m{f},\m{\rho}\right)$ with $K<\min\{N_{1},N_{2}\}$, we have 
	\begin{equation}
		\begin{aligned}
			\textnormal{tr}\left(\m{\Theta}_{\m{f}^{1}\m{f}^{1}}- \m{\Theta}_{\m{f}^{1}\m{f}^{1}}^{\textnormal{Cor}}\right)>0, \quad \textnormal{tr}\left(\m{\Theta}_{\m{f}^{2}\m{f}^{2}}- \m{\Theta}_{\m{f}^{2}\m{f}^{2}}^{\textnormal{Cor}}\right)>0. \label{5.2.15}
		\end{aligned}
	\end{equation}
\end{corollary}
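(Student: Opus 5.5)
We would apply Proposition \ref{prop 13} with $\m{\alpha}=\m{\theta}_{\m{S}}$ and $\m{J}=\m{J}_{2}=\mathcal{R}(\m{K}_{2})$, and show that condition \eqref{5.2.13} fails for generic parameters. We treat $\m{f}^{1}$; the case of $\m{f}^{2}$ follows by symmetry of the UPA manifold \eqref{5.2.14}.

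\textbf{Step 1: complex form of the eliminated blocks.} After eliminating $\m{f}^{2}$, the effective FIM associated with $(\m{f}^{1},\m{\theta}_{\m{S}})$ is the Schur complement of $\m{I}_{\m{f}^{2}\m{f}^{2}}$. The condition \eqref{5.2.13} then says that the $\m{\theta}_{\m{S}}$-directions coupled to $\m{f}^{1}$, after projecting out $\m{f}^{2}$, lie in $\textnormal{col}(\m{J}_{2})$. Equivalently, in terms of the model $\m{X}=\m{A}\m{S}$, there must exist perturbations $\delta\m{f}^{2}$ and $\delta\m{S}\in\textnormal{col}(\m{K}_{2})$ such that the $\m{f}^{1}$-derivative $\m{B}^{1}\m{D}_{l}$ is orthogonally approximated jointly by $\m{B}^{2}\m{D}_{l}\,\delta\m{f}^{2}+\m{A}\,\delta\m{S}_{:,l}$ with the same residual as the unconstrained fit. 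I would rewrite \eqref{5.2.13} as the existence of real matrices $\m{G}$ (acting on $\m{f}^{2}$) and complex $\m{T}$, such that
\begin{equation}
\m{P}^{1}\m{D}_{l}-\m{P}^{2}\m{D}_{l}\m{G}=\m{\Xi}_{l}\m{T}_{\m{\rho}}+\m{\Psi}\m{T}_{l},\quad l=1,\dots,L,
\end{equation}
using Lemmas \ref{Lem5}--\ref{Lem7}. Here $\m{G}$ is the real matrix determined by the Schur-complement coefficients. The real/complex mixing enters only through $\m{G}$.

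\textbf{Step 2: contradiction as in Theorem \ref{Thm 4}.} Taking the $(i,j)$ block with $P_{i}>1$, multiplying by $\m{I}-\m{\rho}_{i}\m{\rho}_{i}^{\dagger}$, and using two columns $l_{1},l_{2}$ with $C_{il_{1}}C_{jl_{2}}-C_{il_{2}}C_{jl_{1}}\neq0$ would yield a necessary condition of the form
\begin{equation}
(\m{I}_{P_{i}}-\m{\rho}_{i}\m{\rho}_{i}^{\dagger})\big[\m{P}^{1}\m{D}_{l}-\m{P}^{2}\m{D}_{l}\m{G}\big]_{\mathcal{I}_{i},\mathcal{I}_{j}}\text{-type terms}=\m{0}.
\end{equation}
The difficulty is that $\m{G}$ mixes the block columns, so the blocks no longer decouple cleanly.

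\textbf{Step 3: exploit the UPA structure (main obstacle).} This is where the UPA structure and the assumption $K<\min\{N_{1},N_{2}\}$ are needed. Since $\m{a}=\m{a}_{1}\otimes\m{a}_{2}$, we have $\m{b}^{1}=\m{b}_{1}\otimes\m{a}_{2}$ and $\m{b}^{2}=\m{a}_{1}\otimes\m{b}_{2}$. I would choose a special frequency configuration and then pass to generic parameters by analyticity, exactly as in Theorem \ref{Thm 4}. The configuration is the near-separable one where all $f^{2}_{k}$ coincide, or more robustly where the $\m{f}^{2}$-coupling $\m{I}_{\m{f}^{1}\m{f}^{2}}$ and $\m{I}_{\m{\theta}_{\m{S}}\m{f}^{2}}$ simplify. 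In such a configuration the cross terms $\m{A}^{H}\m{B}^{2}$ factor through $\m{a}_{1}^{H}\m{a}_{1}$ and $\m{a}_{2}^{H}\m{b}_{2}$, and the eliminated matrix $\m{G}$ vanishes or becomes diagonal. This reduces the condition to $(\m{I}-\m{\rho}_{i}\m{\rho}_{i}^{\dagger})\tilde{\m{P}}^{1}_{ij}=\m{0}$ for an analogue $\tilde{\m{P}}^{1}$ of $\m{P}$ built from the one-dimensional ULA $\m{a}_{1}$. Lemma \ref{Lem8}, applied with $N_{1}>K$ so that $\m{A}_{1}^{H}\m{A}_{1}$ is invertible, shows $\tilde{\m{P}}^{1}_{ij}\neq\m{0}$. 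We then choose $\m{\rho}_{i}$ outside the span of its nonzero columns.

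The hard part is ensuring the chosen configuration keeps the full FIM positive definite. Taking exactly equal $f^{2}_{k}$ makes $\m{A}$ depend only on $\m{a}_{1}$, which is fine since $K<N_{1}$. It also keeps $\m{I}_{\m{f}^{2}\m{f}^{2}}$ invertible because $\m{a}_{2}$ and $\m{b}_{2}$ are independent. If this degenerate point creates trouble, I would instead argue via a limiting or derivative argument on the analytic functions involved. Once one point is found where the analytic (in real and imaginary parts) obstruction is nonzero, it is nonzero for generic $(\m{f},\m{\rho})$, giving strict trace inequality for $\m{f}^{1}$ and, symmetrically, for $\m{f}^{2}$.
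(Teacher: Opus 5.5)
Your plan is essentially the paper's proof. You apply Proposition~\ref{prop 13} and evaluate \eqref{5.2.13} at $f^1_k=\frac{k-1}{N_1}$ with all $f^2_k$ equal (the paper takes $\m{f}^2=\m{0}$); there $\m{P}^2$ is a purely imaginary multiple of $\m{I}_K$, so the $\m{f}^2$-elimination only adds to the $(i,j)$ block a term with factor $C_{il}$, absorbed into $(\m{T}_{\m{\rho}})_{\mathcal{I}_i,\mathcal{I}_j}$; the argument of Theorem~\ref{Thm 4} with Lemma~\ref{Lem8} for the ULA $\m{a}_1$ then shows \eqref{5.2.13} fails, and analyticity gives genericity.

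One correction: your two observations do not make the full FIM positive definite at this degenerate point (it is singular, e.g., when $N_1=K+1$ and some $P_d\geq 3$); they do make the $(\m{f}^2,\m{\theta}_{\m{S}})$ block nonsingular, which is all that is needed for the matrix in \eqref{5.2.13} to be defined and analytic there, so run the genericity argument on the projection of that matrix onto the orthogonal complement of $\textnormal{col}(\m{J}_2)$ and invoke Proposition~\ref{prop 13} only at the generic points.
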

The proof proceeds by constructing a specific set of parameters under which \eqref{5.2.13} does not hold. Specifically, we choose $f_{k}^{1}=\frac{k-1}{N_{1}}$ and $\m{f}^{2}=\m{0}$. We provide a detailed proof of Corollary \ref{Cor 14} in the supplementary material.

\section{Numerical Validation} \label{validation}
In this section, we validate our theoretical analysis through numerical experiments.
Two array geometries are considered in the simulations. The first one is a nested array whose sensor position indices are $\{1,2,3,4,5,6,7,8,16,24,32,40\}$. The second one is a UPA with dimensions $N_{1}\times N_{2}=10 \times 8$. We denote the three CRBs under the general, low-rank, and coherence models by CRB-G, CRB-LR, and CRB-C, respectively. For the nested array, the frequencies are randomly generated with a minimum separation of $0.5/N$, where $N=40$. For the UPA, the frequencies $\m{f}^{1}$ and $\m{f}^{2}$ are randomly generated with a minimum separation of $0.5/(N_{1}\times N_{2})$. The magnitudes of the coherent coefficients are uniformly drawn from $[0.5,1]$, with their phases generated randomly. The source $\m{C}$ are generated from complex Gaussian distributions. Note that the complex Gaussian generation of $\m{C}$ is adopted only for simulation convenience and is not required by the theoretical analysis. The SNR is defined as
$\textnormal{SNR}=10\log_{10}\left(\frac{\|\m{X}\|_{F}^{2}}{ML\sigma}\right)$,
where $M$ is the number of sensors, $L$ is the number of snapshots, fixed at $20$, and $\sigma$ denotes the noise variance. We consider the following three cases. In the first case, $K=6$ and $D=3$, with group sizes $3$, $2$, and $1$. In the second case, $K=6$ and $D=1$, i.e., all sources are coherent. In the third case, $K=D=6$, i.e., all sources are mutually incoherent. The number of Monte Carlo trials is fixed at 500. \par
\begin{figure}[htbp] 
	\centering 
	\includegraphics[width=0.9\linewidth]{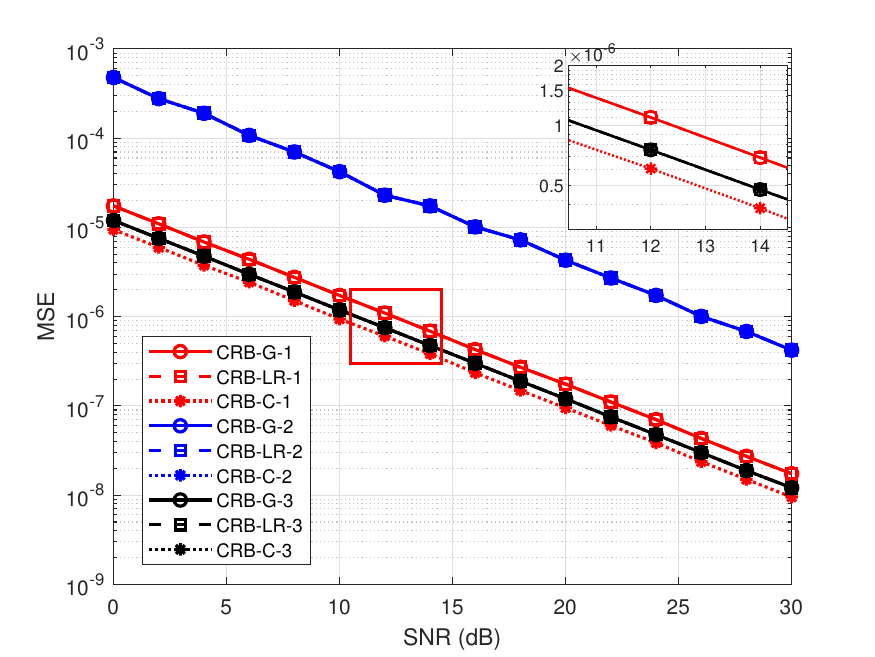} 
	\caption{Comparison of CRB-G, CRB-LR, and CRB-C versus SNR under the nested array.} 
	\label{fig2} 
\end{figure}
In Fig. \ref{fig2}, we present the results obtained with the nested array. In Scenario 1, represented by the red curves, we observe that CRB-G and CRB-LR coincide, and are both higher than CRB-C. For Scenario 2, represented by the blue curves, the condition in Theorem \ref{Thm 4} is not satisfied. In this case, the coherence model degenerates to the low-rank model, and the three CRBs coincide. For Scenario 3, represented by the black curves, the coherence model degenerates to the general model, and all three CRBs are identical. Table \ref{Table 2} compares the numerical values of the three CRBs at $10$ dB under scenario 1, where the relative differences are defined as 
\begin{equation}
	\begin{aligned}
		\Delta_{\textnormal{G,LR}}=\frac{|\textnormal{CRB}_{\textnormal{G}}-\textnormal{CRB}_{\textnormal{LR}}|}{\textnormal{CRB}_{\textnormal{G}}}, \quad \Delta_{\textnormal{G,C}}=\frac{|\textnormal{CRB}_{\textnormal{G}}-\textnormal{CRB}_{\textnormal{C}}|}{\textnormal{CRB}_{\textnormal{G}}}. \label{6.1}
	\end{aligned}
\end{equation}
The results verify the conclusions of Theorems \ref{Thm 3} and \ref{Thm 4} up to numerical precision. \par
Fig. \ref{fig3} presents the results obtained with the UPA. The corresponding numerical comparisons are also reported in Table \ref{Table 2}. These results are consistent with those obtained for the nested array, thereby validating the conclusions of Theorems \ref{Thm 11} and \ref{Thm 12} and Corollary \ref{Cor 14}.

\begin{table}[t]
	\centering
	\caption{Relative differences between the three CRBs.}
	\label{Table 2}
	\renewcommand{\arraystretch}{1.2}
	\setlength{\tabcolsep}{5pt}
	\begin{tabular}{c ccc}
		\toprule
		
		& $\m{f}$ under nested array
		& $\m{f}^{1}$ under UPA
		& $\m{f}^{2}$ under UPA \\
		\midrule
		\(\Delta_{\textnormal{G, LR}}\)
		& 6.1246$\times$ $10^{-16}$ 
		& 7.2437$\times$ $10^{-13}$  
		& 8.4474$\times$ $10^{-13}$  \\
		
		\(\Delta_{\textnormal{G, C}}\)
		& 0.4569 
		& 0.7860 
		& 0.9533 \\
		\bottomrule
	\end{tabular}
\end{table}

\begin{figure}[t]
	\centering
	\subfloat{
		\includegraphics[width=0.9\linewidth]{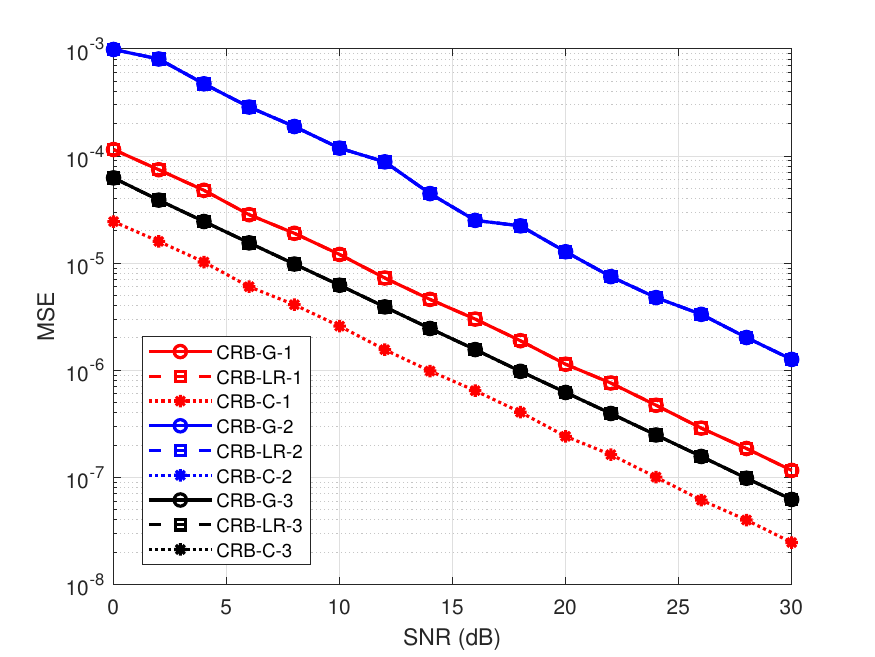}
		\label{fig3.1}
	}	
	\vspace{2mm}
	\subfloat{
		\includegraphics[width=0.9\linewidth]{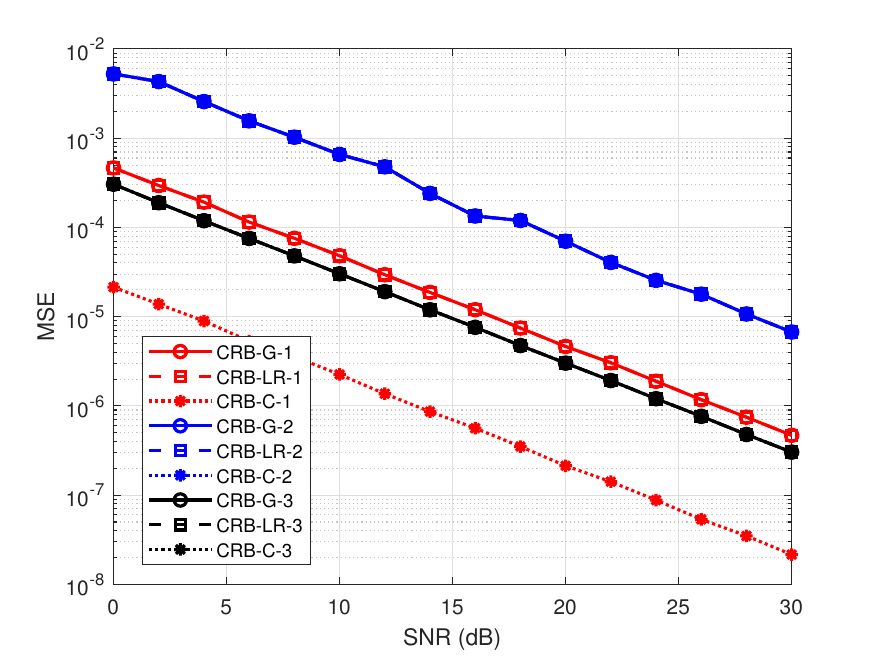}
		\label{fig3.2}
	}
	\caption{Comparison of CRB-G, CRB-LR, and CRB-C for $\m{f}^{1}$ (top) and $\m{f}^{2}$ (bottom) versus SNR under the UPA.}
	\label{fig3}
\end{figure}

\section{Conclusion}
In this paper, we studied the deterministic CRB for coherent DOA estimation under three models: the general model, the low-rank model, and the coherence model. We rigorously established the relationships among the three CRBs. In particular, we proved that exploiting the low-rankness of the source matrix alone does not reduce the frequency CRBs, whereas a strictly lower CRB can be achieved by further exploiting the coherent source structure. These results provide a theoretical explanation for the potential performance gain brought by source structure. Future work will study the CRB behavior of individual frequency components. Another direction is to extend the multidimensional analysis to array geometries beyond the UPA.

\appendices
\section{Proof of Proposition \ref{prop2}}\label{Appendix A}
To apply Theorem \ref{Thm 1}, we define  $\tilde{\m{g}}(\m{\psi},\m{\varphi})=\left[\m{\psi}^{T},\m{g}(\m{\varphi})^{T}\right]^{T}$
whose Jacobian matrix $\tilde{\m{J}}=\textnormal{blkdiag}(\m{I}_{K},\m{J})$ is of full column rank. The parametric constraint is then equivalent to 
\begin{equation}
	\begin{aligned}
		\left(\m{f},\m{\alpha}\right)\in\left\{\left(\m{f},\m{\alpha}\right):\left[\m{f}^{T},\m{\alpha}^{T}\right]^{T}=\tilde{\m{g}}(\m{\psi},\m{\varphi})\textnormal{ for some }\m{\psi},\m{\varphi}\right\}.\label{A.1}
	\end{aligned}
\end{equation} 
Let $\m{U}\in\mathbb{R}^{(n-m)\times n}$ have orthonormal rows and satisfy $\m{U}\m{J}=\m{0}$. Then we obtain that
\begin{equation}
	\begin{aligned}
		\tilde{\m{U}}=\begin{bmatrix}
			\m{0}_{n-m,K} &\m{U}
		\end{bmatrix}. \label{A.2}
	\end{aligned}
\end{equation}
also has orthonormal rows and satisfies $\tilde{\m{U}}\tilde{\m{J}}=\m{0}$. Define $\m{\Upsilon}=\m{U}^{T}\left(\m{U}\m{\Theta}_{\m{\alpha}\m{\alpha}}\m{U}^{T}\right)^{-1}\m{U}$. Applying Theorem \ref{Thm 1} yields that 
\begin{equation}
	\begin{aligned}
		\m{\Theta}^{\textnormal{cons}}=\m{\Theta}-\m{\Theta}\tilde{\m{U}}^{T}\left(\tilde{\m{U}}\m{\Theta}\tilde{\m{U}}^{T}\right)^{-1}\tilde{\m{U}}\m{\Theta},\label{A.3}
	\end{aligned}
\end{equation}
which deduces that
\begin{equation}
	\begin{aligned}
		\m{\Theta}^{\textnormal{cons}}_{\m{f}\m{f}}&=\m{\Theta}_{\m{f}\m{f}}-\m{\Theta}_{\m{f}\m{\alpha}}\m{\Upsilon}\m{\Theta}_{\m{\alpha}\m{f}},\\
		\m{\Theta}^{\textnormal{cons}}_{\m{\alpha}\m{\alpha}}&=\m{\Theta}_{\m{\alpha}\m{\alpha}}-\m{\Theta}_{\m{\alpha}\m{\alpha}}\m{\Upsilon}\m{\Theta}_{\m{\alpha}\m{\alpha}}.
		\label{A.4}
	\end{aligned}
\end{equation}
Since $\m{\Theta}_{\m{f}\m{\alpha}}\m{\Upsilon}\m{\Theta}_{\m{\alpha}\m{f}}\succeq\m{0}_{K}$, we conclude that \eqref{3.7} holds, with equality if and only if
\begin{equation}
	\begin{aligned}
		&\m{\Theta}_{\m{f}\m{\alpha}}\m{\Upsilon}\m{\Theta}_{\m{\alpha}\m{f}}\\
		=&\m{\Theta}_{\m{f}\m{\alpha}}\m{U}^{T}\left(\m{U}\m{\Theta}_{\m{\alpha}\m{\alpha}}\m{U}^{T}\right)^{-1}\m{U}\m{\Theta}_{\m{\alpha}\m{f}}\\
		=&\m{0}_{K}.\label{A.7}
	\end{aligned}
\end{equation}
Since the FIM given in \eqref{3.4} is positive definite and $\m{U}$ is row orthogonal, we get that $\m{U}\m{\Theta}_{\m{\alpha}\m{\alpha}}\m{U}^{T}\succ \m{0}_{n-m}$. Therefore, (\ref{A.7}) is equivalent to
\begin{equation}
	\begin{aligned}
		\m{U}\m{\Theta}_{\m{\alpha}\m{f}}=\m{0}_{n-m,K},\label{A.8}
	\end{aligned}
\end{equation}
which is further equivalent to
\begin{equation}
	\begin{aligned}
		\textnormal{col}(\m{\Theta}_{\m{\alpha}\m{f}})\subseteq \textnormal{col}(\m{J}).\label{A.12}
	\end{aligned}
\end{equation}
According to the Schur complement theorem \cite{zhang2006schur}, we have
\begin{equation}
	\begin{aligned}
		\m{\Theta}_{\m{\alpha}\m{f}}=-\m{I}_{\m{\alpha}\m{\alpha}}\m{I}_{\m{\alpha}\m{f}}\textnormal{Sch}^{-1}(\m{I},\m{I}_{\m{\alpha}\m{\alpha}}).\label{A.13}
	\end{aligned}
\end{equation}
The conclusion \eqref{3.8} is obtained by noting that $\textnormal{Sch}(\m{I},\m{I}_{\m{\alpha}\m{\alpha}})$ is positive definite, due to the positive definiteness of $\m{I}$. Since $\m{\Theta}_{\m{\alpha\alpha}}\succ \m{0}_{n}$, we obtain that $\m{\Theta}_{\m{\alpha\alpha}}\m{U}^{T}\neq \m{0}_{n,n-m}$. Therefore, we get that $\m{\Theta}_{\m{\alpha}\m{\alpha}}\m{\Upsilon}\m{\Theta}_{\m{\alpha}\m{\alpha}}\neq\m{0}_{n}$, and thus \eqref{3.6} holds, which completes the proof.
\section{Proof of Lemma \ref{Lem5}}\label{Appendix B}
The right side of (\ref{3.16a}) equals
\begin{equation}
	\begin{aligned}
		&\begin{bmatrix}
			\Re(\m{A})&-\Im(\m{A})\\
			\Im(\m{A})&\Re(\m{A})
		\end{bmatrix}\cdot
		\begin{bmatrix}
			\Re(\m{B})&-\Im(\m{B})\\
			\Im(\m{B})&\Re(\m{B})
		\end{bmatrix}\\
		=&\begin{bmatrix}
		    \Re(\m{A})\Re(\m{B})-\Im(\m{A})\Im(\m{B})&-\Re(\m{A})\Im(\m{B})-\Im(\m{A})\Re(\m{B})\\
		    \Re(\m{A})\Im(\m{B})+\Im(\m{A})\Re(\m{B})&\Re(\m{A})\Re(\m{B})-\Im(\m{A})\Im(\m{B})
		\end{bmatrix} \\
		=&\begin{bmatrix}
		    \Re(\m{A}\m{B})&-\Im(\m{A}\m{B})\\
		    \Im(\m{A}\m{B})&\Re(\m{A}\m{B})
		\end{bmatrix},
		\label{B.1}
	\end{aligned}
\end{equation}
which is exactly $\mathcal{R}(\m{A}\m{B})$. Similarly, the right side of \eqref{3.16b} equals
\begin{equation}
	\begin{aligned}
		\begin{bmatrix}
			\Re(\m{A})&-\Im(\m{A})\\
			\Im(\m{A})&\Re(\m{A})
		\end{bmatrix}\cdot
		\begin{bmatrix}
			\Re(\m{B})\\
			\Im(\m{B})
		\end{bmatrix}
		=\begin{bmatrix}
			\Re(\m{A})\Re(\m{B})-\Im(\m{A})\Im(\m{B})\\
			\Re(\m{A})\Im(\m{B})+\Im(\m{A})\Re(\m{B})
		\end{bmatrix}. \label{B.2}
	\end{aligned}
\end{equation}
The right side of \eqref{B.2} is exactly $\mathcal{V}(\m{A}\m{B})$.
\section{Proof of Lemma \ref{Lem7}}\label{Appendix C}
If $\textnormal{col}(\m{A})\subseteq\textnormal{col}(\m{B})$, then there exists $\m{T}\in\mathbb{C}^{l\times n}$ such that $\m{A}=\m{B}\m{T}$. Hence 
\begin{equation}
	\begin{aligned}
		\mathcal{V}(\m{A})=\mathcal{V}(\m{B}\m{T})=\mathcal{R}(\m{B})\cdot\mathcal{V}(\m{T}), \label{C.1}
	\end{aligned}
\end{equation} 
deducing that $\textnormal{col}(\mathcal{V}(\m{A}))\subseteq\textnormal{col}(\mathcal{R}(\m{B}))$. Conversely, if $\textnormal{col}(\mathcal{V}(\m{A}))\subseteq\textnormal{col}(\mathcal{R}(\m{B}))$, then there exists $\m{W}\in\mathbb{R}^{2l\times n}$ such that $\mathcal{V}(\m{A})=\mathcal{R}(\m{B})\m{W}$. Partition $\m{W}$ as 
\begin{equation}
	\begin{aligned}
		\m{W}=\begin{bmatrix}
			\m{W}_{1}\\
			\m{W}_{2}
		\end{bmatrix} \label{C.2}
	\end{aligned}
\end{equation}
with $\m{W}_{1},\m{W}_{2}\in \mathbb{R}^{l\times n}$. Letting $\m{T}=\m{W}_{1}+i\m{W}_{2}$, we have $\m{W}=\mathcal{V}(\m{T})$, and thus 
\begin{equation}
	\begin{aligned}
		\mathcal{V}(\m{A})=\mathcal{R}(\m{B})\cdot\mathcal{V}(\m{T})=\mathcal{V}(\m{BT}). \label{C.3}
	\end{aligned}
\end{equation}
Since $\mathcal{V}$ is injective, it follows from \eqref{C.3} that $\m{A}=\m{BT}$, and therefore $\textnormal{col}(\m{A})\subseteq\textnormal{col}(\m{B})$, completing the proof.
\section{Proof of Lemma \ref{Lem8}}\label{Appendix D}
We define $f_{k}=\frac{k-1}{N}$ for $k=1,2,...,K$. It follows from \eqref{3.36} that 
\begin{equation}
	\begin{aligned}
		\left[\m{A}^{H}\m{A}\right]_{ij}=\begin{cases}
				N,  &i=j,\\
			0,  &i\neq j,
		\end{cases} \label{D.1}
	\end{aligned}
\end{equation}
thus $\m{A}^{H}\m{A}=N\m{I}_{K}$. For $i=j$, we have
\begin{equation}
	\begin{aligned}
		\left[\m{A}^{H}\m{B}\right]_{ii}=\sum_{n=1}^{N}i2\pi(n-1)=i\pi N(N-1)\neq0. \label{D.2}
	\end{aligned}
\end{equation}
For $i\neq j$, defining $\omega=e^{i2\pi/N}$, we have
\begin{equation}
	\begin{aligned}
		\left[\m{A}^{H}\m{B}\right]_{ij}=i2\pi\sum_{n=1}^{N}(n-1)\omega^{(j-i)(n-1)}=-\frac{i2\pi N}{1-\omega^{j-i}}, \label{D.3}
	\end{aligned}
\end{equation}
which is nonzero. In conclusion each element of the matrix $\m{P}$ determined by the constructed $\m{f}$ is nonzero, hence $\m{P}_{ij}\neq \m{0}$.
\section{Proof of Lemma \ref{lemma 9}}\label{Appendix E}
Let $\m{P}_{\mathcal{I}_{j}}\triangleq\m{P}_{:,\mathcal{I}_{j}}=\left(\m{A}^{H}\m{A}\right)^{-1}\m{A}^{H}\m{B}_{\mathcal{I}_{j}}$, which is the optimal solution to the following optimization problem:
\begin{equation}
	\begin{aligned}
		\min_{\m{P}_{\mathcal{I}_{j}}}\left\|\m{B}_{\mathcal{I}_{j}}-\m{A}\m{P}_{\mathcal{I}_{j}}\right\|_{F}^{2}.\label{E.1}
	\end{aligned}
\end{equation} 
We partition $\m{P}_{\mathcal{I}_{j}}$ into $\m{P}_{\mathcal{I}_{i},\mathcal{I}_{j}}$ and $\m{P}_{-\mathcal{I}_{i},\mathcal{I}_{j}}$. Then \eqref{E.1} is equivalently transformed as 
\begin{equation}
	\begin{aligned}
		\min_{\m{P}_{\mathcal{I}_{i},\mathcal{I}_{j}},\m{P}_{-\mathcal{I}_{i},\mathcal{I}_{j}}}\left\|\m{B}_{\mathcal{I}_{j}}-\m{A}_{\mathcal{I}_{i}}\m{P}_{\mathcal{I}_{i},\mathcal{I}_{j}}-\m{A}_{-\mathcal{I}_{i}}\m{P}_{-\mathcal{I}_{i},\mathcal{I}_{j}}\right\|_{F}^{2}.\label{E.2}
	\end{aligned}
\end{equation}
For a fixed $\m{P}_{\mathcal{I}_{i},\mathcal{I}_{j}}$, we first solve for $\m{P}_{-\mathcal{I}_{i},\mathcal{I}_{j}}$ and eliminate it, which yields
\begin{equation}
	\begin{aligned}
		\min_{\m{P}_{\mathcal{I}_{i},\mathcal{I}_{j}}}\left\|\m{\Pi}_{\m{A}_{-\mathcal{I}_{i}}}^{\perp}\left(\m{B}_{\mathcal{I}_{j}}-\m{A}_{\mathcal{I}_{i}}\m{P}_{\mathcal{I}_{i},\mathcal{I}_{j}}\right)\right\|_{F}^{2}. \label{E.3}
	\end{aligned}
\end{equation}
We now show that $\m{\Pi}_{\m{A}_{-\mathcal{I}_{i}}}^{\perp}\m{A}_{\mathcal{I}_{i}}$ is of full column rank. Suppose there exists $\m{x}$ such that $\m{\Pi}_{\m{A}_{-\mathcal{I}_{i}}}^{\perp}\m{A}_{\mathcal{I}_{i}}\m{x}=\m{0}$, we obtain that $\m{A}_{\mathcal{I}_{i}}\m{x}\in \textnormal{col}\left(\m{A}_{-\mathcal{I}_{i}}\right)$. Therefore, there exists $\m{y}$ such that
\begin{equation}
	\begin{aligned}
		\m{A}_{\mathcal{I}_{i}}\m{x}=\m{A}_{-\mathcal{I}_{i}}\m{y}.\label{E.4}
	\end{aligned}
\end{equation}
Since $\m{A}$ is of full column rank, we obtain that $\m{x}=\m{0},\m{y}=\m{0}$, deducing that $\m{\Pi}_{\m{A}_{-\mathcal{I}_{i}}}^{\perp}\m{A}_{\mathcal{I}_{i}}$ is also of full column rank. By solving \eqref{E.3}, we get that 
\begin{equation}
	\begin{aligned}
	    \m{P}_{\mathcal{I}_{i},\mathcal{I}_{j}}=\Big[\left(\m{\Pi}_{\m{A}_{-\mathcal{I}_{i}}}^{\perp}\m{A}_{\mathcal{I}_{i}}\right)^{H}&\left(\m{\Pi}_{\m{A}_{-\mathcal{I}_{i}}}^{\perp}\m{A}_{\mathcal{I}_{i}}\right)\Big]^{-1} \\
	     \quad &\left(\m{\Pi}_{\m{A}_{-\mathcal{I}_{i}}}^{\perp}\m{A}_{\mathcal{I}_{i}}\right)^{H}\m{\Pi}_{\m{A}_{-\mathcal{I}_{i}}}^{\perp}\m{B}_{\mathcal{I}_{j}},\label{E.5}
	\end{aligned}
\end{equation}
which is equal to \eqref{5.1.2} after rearrangement.
\section{Proof of Lemma \ref{lemma 10}}\label{Appendix F}
Following Lemma \ref{lemma 9}, we select an index $k\in\mathcal{I}_{j}$ and it suffice to show that there exists $\m{f}$ such that
\begin{equation}
	\begin{aligned}
		\m{A}_{\mathcal{I}_{i}}^{H}\m{\Pi}_{\m{A}_{-\mathcal{I}_{i}}}^{\perp}\m{b}_{k}\neq\m{0},\label{F.1}
	\end{aligned}
\end{equation}
where $\m{b}_{k}=\m{b}(f_{k})$, and $f_k$ is chosen so that $\m{b}_{k}$ is not proportional to $\m{a}(f_k)$, as guaranteed by the assumptions of Lemma \ref{lemma 10}. Let $\m{f}_{-\mathcal{I}_i}$ denote the entries of $\m{f}$ after excluding those indexed by $\mathcal{I}_{i}$. We first prove that, by appropriately choosing $\m{f}_{-\mathcal{I}_{i}}$, one can ensure that $\m{b}_k\notin\textnormal{col}(\m{A}_{-\mathcal{I}_i})$. Note that $f_{k}$ is one entry of $\m{f}_{-\mathcal{I}_{i}}$ since $i\neq j$. Since $\m{a}(f)$ spans $\mathbb{R}^{N}$, there exists $f\neq f_{k}$, such that 
\begin{equation}
	\begin{aligned}
		\m{a}(f)\notin \textnormal{col}\left([\m{a}(f_{k}),\m{b}_{k}]\right). \label{F.2}
	\end{aligned}
\end{equation}
Combining that $\m{b}_{k}$ is not proportional to $\m{a}(f_k)$, we obtain that 
\begin{equation}
	\begin{aligned}
		\m{b}_{k}\notin \textnormal{col}\left([\m{a}(f),\m{a}(f_{k})]\right).\label{F.3}
	\end{aligned}
\end{equation}
Therefore, the resulting $f$ can be selected as an entry of $\m{f}_{-\mathcal{I}_{i}}$. Repeating this argument, we can construct the entire subvector $\m{f}_{-\mathcal{I}_{i}}$ and thus $\m{A}_{-\mathcal{I}_{i}}$. It follows from $\m{b}_k\notin\textnormal{col}(\m{A}_{-\mathcal{I}_i})$ that 
\begin{equation}
	\begin{aligned}
		\m{\Pi}_{\m{A}_{-\mathcal{I}_{i}}}^{\perp}\m{b}_{k}\neq\m{0}. \label{F.4}
	\end{aligned}
\end{equation}
We next construct $\m{f}_{\mathcal{I}_i}$. We choose an index $o\in\mathcal{I}_i$ and let $f_o=f_k+\delta$, where $\delta>0$ will be specified later. It is calculated that
\begin{equation}
	\begin{aligned}
		&\m{a}^{H}(f_{o})\m{\Pi}_{\m{A}_{-\mathcal{I}_{i}}}^{\perp}\m{b}_{k}\\
		=&\left(\m{a}(f_{k})+\delta \m{b}_{k}\right)^{H}\m{\Pi}_{\m{A}_{-\mathcal{I}_{i}}}^{\perp}\m{b}_{k}+o(\delta)\quad  (\textnormal{Taylor expansion.})\\
		=&\m{a}^{H}(f_{k})\m{\Pi}_{\m{A}_{-\mathcal{I}_{i}}}^{\perp}\m{b}_{k}+\delta\cdot\m{b}_{k}^{H}\m{\Pi}_{\m{A}_{-\mathcal{I}_{i}}}^{\perp}\m{b}_{k}+o(\delta)\\
		=&\delta\cdot\m{b}_{k}^{H}\m{\Pi}_{\m{A}_{-\mathcal{I}_{i}}}^{\perp}\m{b}_{k}+o(\delta). \quad (\m{a}(f_{k})\in\textnormal{col}(\m{A}_{-\mathcal{I}_{i}}).) \label{F.5}
	\end{aligned}
\end{equation}
It follows from \eqref{F.4} that $\m{b}_{k}^{H}\m{\Pi}_{\m{A}_{-\mathcal{I}_{i}}}^{\perp}\m{b}_{k}>0$. Therefore, when $\delta$ is sufficiently small and nonzero, we have $\m{a}^{H}(f_{o})\m\Pi_{\m {A}_{-\mathcal{I}_{i}}}^{\perp}\m{b}_{k}\neq 0$, and thus \eqref{F.1} holds.
We then fix such a $\delta$, and choose the remaining entries of $\m{f}_{\mathcal{I}_{i}}$ arbitrarily, subject only to the requirement that they are distinct from the already selected frequencies. Consequently, the above construction yields an $\m{f}$ with the required property, which completes the proof.

\bibliographystyle{IEEEtran}
\bibliography{references}

\end{document}